\newcommand{\etal}{{\em et al. }}
\newcommand{\norm}[1]{{\left\Vert#1\right\Vert}_2}
\newcommand{\prob}[1]{{\bf \mbox{\bf Pr}}\left[#1\right]}
\newcommand{\probomega}[1]{{\bf\mbox{\bf Pr}}_{\Omega}\left[#1\right]}
\newtheorem{theorem}{Theorem}
\newtheorem{lemma}[theorem]{Lemma}
\newenvironment{proofof}[1]{{\bf Proof of #1:}}{$\qed$\par}
\begin{document}
\conferenceinfo{SIGMOD'10,}{June 6--11, 2010, Indianapolis, Indiana, USA.}
\CopyrightYear{2010}
\crdata{978-1-4503-0032-2/10/06}
\clubpenalty=10000
\widowpenalty = 10000

\title{Similarity Search and Locality Sensitive Hashing using Ternary Content Addressable Memories}
\numberofauthors{4}

\author{
\alignauthor
Rajendra Shinde \titlenote{Research supported by NSF award 0915040 and a gift from Lightspeed Venture Partners.} \\
\affaddr{Stanford University.}\\
\affaddr{Stanford, CA USA} \\
\email{rbs@stanford.edu}
\alignauthor Ashish Goel 
\titlenote{Research supported by NSF award 0915040 and a gift from Cisco Systems.} \\
\affaddr{Stanford University.}\\
\affaddr{Stanford, CA USA} \\
\email{ashishg@stanford.edu}
\alignauthor Pankaj Gupta 
\titlenote{Research supported by a gift from Cisco Systems. This work was done while the author was at Stanford University.}\\
\affaddr{Twitter Inc.} \\
\affaddr{San Francisco, CA USA} \\
\email{pankaj@cs.stanford.edu} 
\and
\alignauthor Debojyoti Dutta \\
\affaddr{Cisco Systems Inc.} \\ 
\affaddr{San Jose, CA, USA} \\
\email{dedutta@cisco.com}
}

%\subtitle{[Extended Abstract]
%\titlenote{A full version of this paper is available as
%\textit{Author's Guide to Preparing ACM SIG Proceedings Using
%\LaTeX$2_\epsilon$\ and BibTeX} at
%\texttt{www.acm.org/eaddress.htm}}}

\date{10 March 2009}
\maketitle
\begin{abstract}
Similarity search methods are widely used as kernels in various data mining and machine learning applications including those in computational biology, web search/clustering. Nearest neighbor search (NNS) algorithms are often used to retrieve similar entries, given a query. While there exist efficient techniques for exact query lookup using hashing, similarity search using exact nearest neighbors suffers from a "curse of dimensionality", i.e. for high dimensional spaces, best known solutions offer little improvement over brute force search and thus are unsuitable for large scale streaming applications. Fast solutions to the approximate NNS problem include Locality Sensitive Hashing (LSH) based techniques, which need storage polynomial in $n$ with exponent greater than $1$, and query time sublinear, but still polynomial in $n$, where $n$ is the size of the database. In this work we present a new technique of solving the approximate NNS problem in Euclidean space using a Ternary Content Addressable Memory (TCAM), which needs near linear space and has O(1) query time. In fact, this method also works around the best known lower bounds in the cell probe model for the query time using a data structure near linear in the size of the data base.

TCAMs are high performance associative memories widely used in networking applications such as address lookups and access control lists. A TCAM can query for a bit vector within a database of ternary vectors, where every bit position represents $0$, $1$ or $*$. The $*$ is a wild card representing either a $0$ or a $1$. We leverage TCAMs to design a variant of LSH, called Ternary Locality Sensitive Hashing (TLSH) wherein we hash database entries represented by vectors in the Euclidean space into $\{0,1,*\}$. By using the added functionality of a TLSH scheme with respect to the $*$ character, we solve an instance of the approximate nearest neighbor problem  with 1 TCAM access and storage nearly linear in the size of the database. We validate our claims with extensive simulations using both real world (Wikipedia) as well as synthetic (but illustrative) datasets. We observe that using a TCAM of width 288 bits, it is possible to solve the approximate NNS problem on a database of size 1 million points with high accuracy. Finally, we design an experiment with TCAMs within an enterprise ethernet switch (Cisco Catalyst 4500) to validate that TLSH can be used to perform 1.5 million queries per second per 1Gb/s port. We believe that this work can open new avenues in very high speed data mining. 

\end{abstract}

% A category with the (minimum) three required fields
%\category{H.4}{Database Management System and Algorithm Designs for emerging hardware architectures: --}{Miscellaneous}
%A category including the fourth, optional field follows...
%\category{D.2.8}{Software Engineering}{Metrics}[complexity measures, performance measures]
\category{H.3.1}{Content Analysis and Indexing}{Indexing methods}
\terms{Algorithms, Theory}
\keywords{Locality Sensitive Hashing, Nearest Neighbor Search, Similarity Search, TCAM}
%\terms{Delphi theory}

%\keywords{ACM proceedings, \LaTeX, text tagging}

\section{Introduction}

% Introduction to SS and usual implementations as nearest neighbor
Due to the explosion in the size of datasets and the increased availability of high speed data streams, it is has become necessary to speed up similarity search (SS), i.e. to look for objects within a database similar to a query object,  which is a critical component of most data mining and machine learning tasks. For example, consider searching for similar images within a corpus of billions of images and repeating this for a query set consisting of millions of images using as little power and computation time as possible. One could use the streaming model and stream the corpus over the latter set. In order to do this, one would typically deploy very fast computing devices or distribute it over several compute devices. In this paper, we show how this goal can be achieved with just an associative memory module,  ternary content addressable memory (TCAM) \cite{tcam:survey}, commonly used in networking for route lookups and access control list (ACL) filtering, to perform a specific variant of SS, i.e. determine the approximate nearest neighbor for the $L_2$ or Euclidean space. 

% SS--> nearest neighbor
Common tasks in mining and learning depend heavily on SS. For example, clustering algorithms are designed to maximize intra cluster similarity and minimize inter cluster similarity. In classification, the label of a new query object is determined based on its similarity to trained (labeled) data and their labels.  In several applications of SS such as in content based search, pattern recognition and computational biology, objects are represented by a large number of features in a high dimensional (metric) space, and SS is typically implemented using nearest neighbor search routines. Given a set consisting of $n$ points, the nearest neighbor search problem\cite{MP54} builds a data structure which, given a query point, reports the data point nearest to the query. For example, nearest neighbor methods and their variants have been used for classification purposes \cite{cover67}, stream classification \cite{JHan:stream:class} and clustering heuristics \cite{berkhin}. Applications of SS range from content search, lazy classifiers, to  genomics, proteomics, image search, and NLP \cite{Tzane,ravi:NLP:clustering,Dutta:Cheng,Buhler,google:video:lsh,Charikar:multiprobe,Brian:Kulis}.

%{\bf Solutions to SS and nearest neighbor:}
Existing solutions to the exact nearest neighbor problem offer little improvement over brute force linear search. The best known solutions to exact nearest neighbor include those which use space partitioning techniques like $k$-d trees \cite{Bentley:kdtree}, cover trees \cite{Kakade:covertrees}, navigating nets \cite{Krauthgamer:lee:navigating:nets}. However these techniques do not scale well with dimensions. In fact an experimental study\cite{expt:space:partitioning} indicates when number of dimensions is more than 10, space partitioning techniques are in fact slower than brute force linear scan.

% Enter LSH
A class of solutions that have shown to scale well are those that are based on locality sensitive hashing (LSH) \cite{indyk:survey} which solve the {\emph approximate} nearest neighbor problem. The $c$-Approximate Nearest Neighbor problem ($c$-ANNS) allows the solutions to return a point whose distance to the query is at most $c$ times the distance from the query to its nearest neighbor.
A family of hash functions is said to be locality-sensitive if it hashes nearby points to the same bin with {\it high} probability and hashes far-off points to the same bin with {\it low} probability. To solve the approximate nearest neighbor problem on a set of $n$ points in $d$ dimensional Euclidean space, the data points are hashed to a number of buckets using locality-sensitive hash functions in the pre-processing step. To perform a similarity search, the query is hashed using the same hash functions and the similarity search is performed on the data points retrieved from the corresponding buckets. %[Make crisper]
In the last few years LSH has been extensively used for SS in diverse applications including bioinformatics \cite{Buhler, Dutta:Cheng}, kernelized LSH in computer vision \cite{Brian:Kulis}, clustering \cite{indyk:clustering}, time series analysis \cite{indyk:timeseries}.
%Given parameters $(l,u,p_l,p_u)$, a $(l,u,p_l,p_u)$-LSH family of hash functions has the property that points separated by a distance at most $l$ are hashed to the same bin with probability at least $p_l$ and points separated by a distance at least $cl$ are hashed to the same bin with probability at most $p_u$. %[or statement 2:]
%{\bf Complexity of LSH and the need of hardware primitives}
%LSH based algorithms require space $O(n^{(1+\rho)})$ and have a query time $O(n^\rho)$ where $\rho = \log{p_l}/\log{p_u} $. %For Euclidean space Andoni and Indyk show a LSH family which achieves $\rho = O(1)/c^2$ \cite{AI06} and it is almost optimal considering the lower bound on LSH proved in by Motwani etal. \cite{MNP06}.
For the Euclidean space, the optimal LSH based algorithm which solves the $c$-ANNS problem has a space requirement of $O(n^{(1+1/c^2)})$ and a query time of $O(n^{(1/c^2)})$. For $c\approx 1$, this near quadratic space requirement of LSH and query time sub-linear(but still polynomial) in $n$, make it difficult to use LSH in streaming applications, especially at extremely high speeds, which are beyond the capability of a CPU. In such a scenario, we look for hardware primitives to accelerate c-ANNS. 

% enough LSH bashing, enter TCAM/TLSH
In this paper, we develop a variant of LSH, Ternary Locality Sensitive Hashing (TLSH), for solving nearest neighbor problem in large dimensions using TCAMs and we show that it is possible to formulate an almost 'ideal' solution to the c-ANNS problem with a space requirement near linear in the size of the data base and a query time of $O(1)$.
A TCAM is an associative memory where each data ``bit'' is capable of storing one of three states: 0,1,* which we denote as ternions. where * is a wildcard that matches both 0 and 1 \cite{tcam:survey}. Thus a TCAM can be considered to be a memory of $n$ vectors of $w$ ternions wide.The presence of wildcards in TCAM entries implies that more than one entry could match a search key. When this happens, the index of the highest matching entry (i.e. appearing at the lowest physical address) is typically returned. 
Access speeds for TCAMs are comparable to the fastest, most-expensive RAMs. For almost a decade, TCAMs have been used in switches and routers, primarily for the purposes of route lookup (longest prefix matching) \cite{netlogic,49} and packet classification \cite{netlogic,38}. In this paper, we present one application in which c-ANNS problem can be solved using a single TCAM lookup using a TCAM with width poly($\log{n}$) where n is the size of the database by using the TLSH family.
%Due to their large power consumption in comparison to SRAMs, TCAMs are still appropriate as hardware accelerators only for applications where they provide orders of magnitude speed improvement. 

%[Introduction of the TLSH family:]
For TLSH, we use ternary hash functions that hash any point in $\mathds{R}^d$ to the set {0,1,*}. Analogous to LSH, TLSH has property that nearby points are hashed to matching ternions with high probability. We obtain a TLSH family by partitioning $\mathds{R}^d$ using randomly oriented randomly translated parallel hyperplanes. Alternate regions between the hyperplanes are hashed to a *, while the remaining regions are hashed to 0,1,0,1.. alternately.
\begin{figure} \centering \label{intuitive:description}
\includegraphics[scale=0.55]{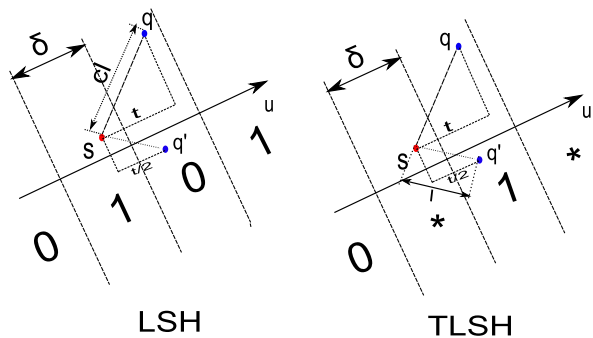}
\caption{A comparison of LSH and TLSH: We choose a random direction $\hat{u_i}$ and consider a family of hyperplanes orthonormal to it, adjacent hyperplanes being separated by $\delta$. The LSH family hashes regions between the hyperplanes to $0,1,0,1\ldots$, while the TLSH family hashes the regions between the hyperplanes to $0,*,1,*,0,*\ldots$}
%\hfill \includegraphics[scale = 0.25]{TLSH.pdf} \caption{TLSH}
\end{figure}

%[Intuition: explain why TLSH is an unbounded improvement]
In order to compare the TLSH family to the LSH family, we consider an example in which we choose a random direction $\hat{u_i}$ (say) and consider a family of hyperplanes orthonormal to it with adjacent hyperplanes separated by $\delta$ (say). The family of hyperplanes partitions $\mathds{R}^d$ as shown in figure \ref{intuitive:description}. We consider a LSH family which hashes the region between the hyperplanes to $0,1,0,1 \ldots$ and a TLSH family which hashes the region between the hyperplanes to $0,*,1,*,0,*\ldots$ as shown in figure \ref{intuitive:description}. 
Note that both LSH and TLSH project points in $\mathds{R}^d$ on to the random direction $\hat{u}$. Consider any two points ${\bf s,q} \in \mathds{R}^d$ (as shown in figure \ref{intuitive:description}), which are a distance $cl$ apart and whose projections on $\hat{u}$ are separated by $t$ (say) and another point ${\bf q'} \in \mathds{R}^d$ at a distance of $l$ from ${\bf s}$, whose projection on $\hat{u}$ is separated from that of ${\bf s}$ by $t/c$ (say). Ideally the notion of locality-sensitive hashing is aimed achieving the twin objectives of separating far-off points (hash them to opposite bits ) and hashing nearby points to matching bits with high probability. However, in this example we see that using a "binary" hash function from the LSH family, if the probability of separating ${\bf s}$ and ${\bf q}$ (hashing them to opposite bits) is $\psi(t)$ (say), then informally, the probability of separating ${\bf s}$ and ${\bf q'}$ is at least $\psi(t)/c$. Thus the "binary" hash function does not achieve both objectives simultaneously. On the other hand, if we set the distance between the hyperplanes of the TLSH family to value more than $t/c$, any function from the TLSH family will not separate ${\bf s}$ and ${\bf q'}$. This is because, any choice of translated hyperplanes will ensure that one of the following always happens: \begin{enumerate}
\item either ${\bf s}$ and ${\bf q'}$ are both hashed to 0 or both to 1.
\item One of ${\bf s}$ and ${\bf q'}$ is hashed to a *.
\end{enumerate} 
Thus the ternary hash representations of ${\bf s}$ and ${\bf q'}$ always match. In this manner, the regions hashed to a * "fuzz" the boundaries between regions hashed to $0$'s and $1$'s such that ternary hashed representation of nearby points match with high probability.

%[Intuition regarding the use of a TCAM: and a final precise statement spelling out the result]
We leverage the ability of a TCAM to represent the wildcard character (*) in order to implement TLSH and store the ternary hash signatures generated by it. Also, using the property of the TCAM of returning the highest matching entry, it is possible to configure this TCAM so that it solves a sequence of the $(1,c)$-Near Neighbor problem, which is a decision version of the $c$-ANNS problem, thus leading to a solution of the $c$-ANNS problem itself (details in section \ref{canns}). Hence, using the TLSH family of hash functions along with a TCAM of width poly$(\log{n})$ where $n$ is the size of the database, the c-ANNS problem can be solved in a single TCAM lookup [details in Sec \ref{canns}]. We believe this observation is very promising with regard to solving similarity search problems in streaming environments. Also we note that Tao \etal \cite{tao} describe a novel method to solve the $c$-ANNS problem without solving a sequence of near neighbor problems. Their method involves computation of longest common prefixes of binary strings. It would be interesting to know if their methods can be adapted for use with TCAMs in order to avoid solving a sequence of the $(1,c)$-Near Neighbor problems. 

%[A small note regarding beating lower bounds: the power of parallel lookups]
We note that this method beats the lower bounds for c-ANNs in the cell probe model according to which, any data structure nearly linear in n needs $\Omega(\log{n}/\log{\log{n}})$ probes in the data base in order to answer approximate nearest neighbors accurately \cite{PTW08}. This is because the TCAM implicitly implements highly parallel operations which do not conform to the cell probe model of computation. 
%[simulations:]

We also present simulations which explore the space of design parameters and establish the trade-off involved between the size of the TCAM used and the performance of our algorithm. We use a combination of real world and artificially generated data sets each containing one million points in a $64$ dimensional Euclidean space. The first data set contains randomly generated points (from a suitably chosen localized region), the second one contains simHash signatures of web pages belonging to the English Wikipedia (from a snapshot of the English Wikipedia in 2005), and the third one is again artificially generated in order to maximize the number of false positives and false negatives, by having many data points on the threshold of being similar or dissimilar to a query. From our simulations we observe that a TCAM of width 288 bits solves the decision version of the $2$-Approximate Nearest Neighbor problem accurately for the aforementioned databases. 

%[experiments:]
In order to validate our simulations, we design a novel experiment using TCAMs within a CISCO Catalyst 4500 Ethernet switch
and high speed traffic generators. We demonstrate how one can process approximately 1.5M approximate nearest neighbor queries per second for each port. 
%[Comparison to GPUs, FPGAs]
Thus, it is technically feasible to build devices with TCAMs that could serve as high speed similarity engines, in a vein similar to using GPUs to accelerate certain classes of application. Note that in our case, a TCAM is much more suitable due to the combined implicit memory access (lookup) and wildcard search done in parallel.  

\subsection{Organization}
\sloppy In section \ref{prelim}, we define the the $c$-Approximate Nearest Neighbor problem, the $(l,c)$-Near Neighbor problem,  and a $(l,u,p_l,p_u)$-TLSH family. In section \ref{da} we describe the construction and analysis of a \begin{math}(1,c,p_1(\delta),p_2(\delta/c))\end{math}-TLSH family for any $\delta \geq 0$. Section \ref{appnns} describes the use of a $(1,c,p_1(\delta),p_2(\delta/c))$-TLSH family in solving the $(1,c)$-Near Neighbor problem, $(1,c)$-Similarity Search problem and $c$-Approximate Nearest Neighbor problem. Section \ref{simu:expt} describes simulations using a combination of real life and synthetic data sets containing a million points in $64$ dimensional space, which explore the trade-off between the width (size) of the TCAM and the performance of the method, along with experiments which validate our results. Section \ref{relwork} summarizes the related work. We summarize the findings of this paper in section \ref{conclusion}.

%\section{Introduction}
\section{Preliminaries}\label{prelim}

First we define the $c$-Approximate Nearest Neighbor Search problem .
\begin{definition} $c$-Approximate Nearest Neighbor Search or the $c$-ANNS problem: \\
Given a set $S$ of $n$ points in $\mathds{R}^d$, construct a data structure which, given a query $q \in \mathds{R}^d$ returns a point $s \in S$ whose distance from $q$ is at most $c$ times the distance between $q$ and the nearest neighbor of $q$ in $S$.  
\end{definition}
Next, we define the TCAM match operation "$=_T$'' which declares that two sides match if both are equal or
one of them is a $*$.
\begin{definition} If $A,B \in \{0,1,*\}$, then $A{=}_TB$ if and only if $A = B \text{ or } A = * \text{ or } B = *$. The complementary relation is referred to as $\neq_T$.
\end{definition}

%Let $T_l = \{x \in S: \norm{x-q} \leq l\}$ and $T_u = \{ x \in S : \norm{x-q} \geq cl\}$. Then the $(l,c)$- SS problem is to report $T \subseteq S$, such that $T_l \subseteq T$ and $T_u \cap T = \phi$. We solve the $\eta$- randomized version of this problem, i.e. output $T$ such that $\forall x \in T_l$, $x \in T$ w.p. atleast $1 -\eta$ and $T \cap T_u = \phi$ w.p. atleast $1 - \eta$. The former is equivalent to saying that the false negative probability $p^{fn} \leq \eta$. The latter is equivalent to saying that the false positive probability is atmost $\epsilon$. Now, let $p^{fp}$ denotes the probability that for some $y \in T_u$, we have $y \in T$. (Note that $p^{fp}$ denotes the false positive probability with respect to a specific point in $S$.) If $p^{fp} \leq \eta/n$ and using $|T_u| \leq n$, the expected number of false positives is atmost $\eta$. Then using Markov's inequality, we can argue that the probability of having any false positive is at most $\eta$ which shows that $p^{fp} \leq \eta/n$ suffices to satisfy the latter.  Thus a data structure which has $p^{fn} \leq \epsilon$ and $p^{fp} \leq \frac{\epsilon}{n}$ solves the $(l,c)$- SS problem. Note that we can scale down the distances between the points by $l$ in which case we need to solve the above problem for $l = 1$. Thus we will consider $l = 1$ for the rest of the paper and refer to this problem as $(1,c)$- Similarity Search problem. 
\begin{definition} $(l,c)$-Near Neighbor problem or the $(l,c)$-NN problem: \\
Given a set $S$ of $n$ points in $\mathds{R}^d$, construct a data structure which, given a query point $q \in \mathds{R}^d$,  if there exists a point $s_l \in S$ such that $\norm{s_l-q} \leq l$, then reports ``Yes'' and a point $s$ such that $\norm{s-q} \leq cl$ and if there exists no point $s_u$ such that $\norm{s_u-q} \leq cl$ then reports ``No''.
\end{definition}
Note that we can scale down all the coordinates of points by $l$ in which case the above problem needs to be solved only for $l = 1$. Accordingly we discuss the solution of $(1,c)$-NN problem in section \ref{appnns}. Also, note that the $(l,c)$-NN problem is the decision version of the $c$-ANNS problem. The $c$-ANNS problem can be reduced to $O(\log{\frac{n}{c-1}})$ instances of $(1,c)$-NN problems \cite{HP01}. Next analogous to \cite{IM98}, we define a ternary locality sensitive hashing family. % and refer to this problem as $(1,c)$-Near Neighbor problem. 
\begin{definition} \sloppy Ternary locality sensitive hashing family (TLSH): \\
A distribution $\Omega$ on a family $\mathcal{G}$ of ternary hash functions $($i.e. functions which map
$\mathds{R}^d \to \{0,1,*\})$ is said to be an $(l,u,p_l,p_u)$-TLSH if $\forall \; x,y \in \mathds{R}^d$
\begin{equation*}\begin{array}{l}
\text{if } \norm{x-y} \leq l \text{ then } \probomega{g(x)=_Tg(y)} \geq p_l, \\ 
\text{if } \norm{x-y} \geq u \text{ then } \probomega{g(x)=_Tg(y)} \leq p_u. \\
\end{array}\end{equation*}
where $g$ is drawn from the distribution $\Omega$.
\end{definition}

%\fussy Similar to the definition of $(l,c)$-Near Neighbor problem, we define the $(l,c)$-Similarity Search problem where the data structure is supposed to return all data points which match a given query point. 

%\begin{definition} $(l,c)$-Similarity Search(SS) Problem: \\
%Given a set $S$ of $n$ points in $\mathds{R}^d$  and a query point $q \in \mathds{R}^d$,
%report a set $T \subseteq S$ such that $\{x \in S: \norm{x-q} \leq l\} \subseteq T$ and $\{ x \in S :
%\norm{x-q} > cl\} \cap T = \phi$.
%\end{definition}
\section{\sloppy Design and Analysis of a TLSH family}\label{da}
In this section we describe the construction and analysis of a TLSH family. We will show its application in solving the $(1,c)$-NN problem in section \ref{appnns}. Let $\delta > 0$ be any constant. Next, we describe the construction of a family of ternary hash functions $\mathcal{G}_{\delta} = \{g_{\delta} : \mathds{R}^d \to \{0,1,*\}\}$. 

Each hash function in the family $\mathcal{G}_{\delta}$ is indexed by a random choice of ${\bf a}$ and $b$ where ${\bf a} \in \mathds{R}^d$, individual components $a_i$ of ${\bf a}$, $i = 1\ldots d$ are chosen independently from the normal distribution $\mathcal{N}(0,1)$ where $\mathcal{N}(\mu,\sigma)$ denotes a normal distribution of mean $\mu$ and variance $\sigma^2$, and $b$ is a real number chosen uniformly from $(0,2\delta)$. We represent each hash function  in the family $\mathcal{G}_{\delta}$ as $g_{\delta,{\bf a},b}: \mathds{R}^d \to \{0,1,*\}$ and $g_{\delta, {\bf a}, b}$ maps a $d$ dimensional vector onto the set $\{0,1,*\}$. For sake of convenience, we drop the subscript $\delta$ from $g$ and refer to it as $g_{{\bf a},b}$ which is defined as follows. Given ${\bf a},b$, for any $x \in \mathds{R}^d$, let $j = \lfloor \frac{{\bf x}.{\bf a}+b}{\delta} \rfloor\text{mod}(4)$ where $\text{mod}$ denotes the modulus function.
\begin{equation*}
\begin{array}{ll}
\text{if } j = 0& g_{{\bf a},b}({\bf x}) = 0  \\
\text{if } j = 2& g_{{\bf a},b}({\bf x}) = 1  \\
\text{if } j = 1 \text{ or 3}& g_{{\bf a},b}({\bf x}) = *  \\
\end{array}
\end{equation*}

Having given a formal definition, we give an intuitive description this family of hash functions. 
Consider a partition of the space $\mathds{R}^d$ due to the family of hyperplanes orthonormal to $\bf a$, adjacent planes separated by $\delta$ and randomly shifted from the origin by $-b$. Then the function $g_{{\bf a},b}({\bf x}):\mathds{R}^d \to \{0,1,*\}$ hashes alternate regions to $*$ and the remaining regions are hashed to $0,1$ alternately. We show in this section that $\mathcal{G}_{\delta}$ is a TLSH family with parameters that are exponentially better than LSH. We show applications of this scheme in section \ref{appnns}. 

Next, we state the following theorem which is the main technical contribution of this paper.
\begin{theorem} \label{main}For all $\delta \geq 2c$ the family of ternary hash functions $\mathcal{G_{\delta}}$ is a $(1,c,p_1(\delta),p_2(\frac{\delta}{c}))$-TLSH family where $ p_1(z) = 1 -\frac{1}{z^3\sqrt{2\pi}}e^{-\frac{z^2}{2}}$, and $p_2(z) = 1-\frac{1}{5z^3\sqrt{2\pi}}e^{-\frac{z^2}{2}}$. \end{theorem}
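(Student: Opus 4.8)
The plan is to reduce the $d$-dimensional collision probability to a one-dimensional Gaussian tail integral via the $2$-stability of the normal distribution, and then to compute the collision probability of the (folded) hash exactly as a function of the projected distance. Throughout, fix $\mathbf{x},\mathbf{y}$, set $r=\norm{\mathbf{x}-\mathbf{y}}$, and write $\phi_\sigma(\tau)=\frac{1}{\sigma\sqrt{2\pi}}e^{-\tau^2/2\sigma^2}$ for the $\mathcal N(0,\sigma^2)$ density, $\phi(z)=\frac{1}{\sqrt{2\pi}}e^{-z^2/2}$, and $Q(z)=\prob{Z>z}$ for $Z\sim\mathcal N(0,1)$.

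First I would use that each $a_i\sim\mathcal N(0,1)$, so the projection gap $t:=(\mathbf{x}-\mathbf{y})\cdot\mathbf{a}\sim\mathcal N(0,r^2)$; writing $\tau=t/\delta$ gives $\tau\sim\mathcal N(0,\sigma^2)$ with $\sigma=r/\delta$. Conditioning on $\mathbf{a}$, the value $s_x=\mathbf{x}\cdot\mathbf{a}+b$ is a constant plus $b$, and $b$ is uniform on an interval of length exactly $2\delta$. The crucial observation is that the match indicator $M(s_x,s_y)=\mathds{1}[g_{\mathbf{a},b}(\mathbf{x})=_T g_{\mathbf{a},b}(\mathbf{y})]$ is invariant under the simultaneous shift $(s_x,s_y)\mapsto(s_x+2\delta,s_y+2\delta)$: such a shift swaps the $0$-bands with the $1$-bands while fixing the $*$-bands, so matches stay matches and mismatches stay mismatches. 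Hence, with $t$ fixed, $s_x\mapsto M(s_x,s_x-t)$ has period $2\delta$, and averaging the uniform $b$ over one full period yields a probability independent of $\mathbf{a}$ beyond its dependence on $t$. This is exactly why $b$ is drawn from $(0,2\delta)$, and it lets me write $\prob{g(\mathbf{x})=_T g(\mathbf{y})}=1-\mathbf{E}_{\tau}[P(\tau)]$, where $P(\tau)=\frac{1}{2\delta}\int_0^{2\delta}\mathds{1}[\text{mismatch}]\,ds$.

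Next I would compute $P(\tau)$ explicitly. A mismatch requires one projection in a $0$-band and the other in a $1$-band; on the length-$2\delta$ averaging window $g(\mathbf{x})$ takes only the values $0$ and $*$, so a mismatch forces $g(\mathbf{x})=0$ and $g(\mathbf{y})=1$. A short interval-intersection calculation then shows that $P$ is an even, period-$4$ triangular wave, supported on $|\tau|\bmod 4\in[1,3]$ and equal to $\frac{1}{2}\min(|\tau|-1,\,3-|\tau|)$ on the central bump $|\tau|\in[1,3]$; in particular $P(\tau)=0$ for $|\tau|\le 1$ (the $*$-bands keep the $0$- and $1$-bands at least $\delta$ apart), and $P(\tau)\le\frac{1}{2}(|\tau|-1)$ for all $|\tau|\ge1$. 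For the near case ($r\le 1$, so $\sigma\le 1/\delta$) I would upper bound $\mathbf{E}_\tau[P]$ using $P(\tau)\le\frac{1}{2}(|\tau|-1)$ together with $\int_1^\infty\tau\,\phi_\sigma=\sigma^2\phi_\sigma(1)$ and $\int_1^\infty\phi_\sigma=Q(1/\sigma)$, getting $\mathbf{E}_\tau[P]\le\sigma^2\phi_\sigma(1)-Q(1/\sigma)$; this is increasing in $\sigma$, so the worst case is $\sigma=1/\delta$, and the enveloping Mills-ratio bound $Q(z)\ge\frac{\phi(z)}{z}(1-z^{-2})$ at $z=\delta$ gives $\mathbf{E}_\tau[P]\le\frac{1}{\delta^3\sqrt{2\pi}}e^{-\delta^2/2}$, i.e. $\prob{g(\mathbf{x})=_T g(\mathbf{y})}\ge p_1(\delta)$. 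For the far case ($r\ge c$, so $\sigma\ge c/\delta$) I would instead lower bound $\mathbf{E}_\tau[P]$ by keeping only the left half $[1,2]$ of the central bump, where $P(\tau)=\frac{1}{2}(\tau-1)$, subtracting the negligible tail past $\tau=2$; the sharper two-sided bound $Q(z)\le\frac{\phi(z)}{z}(1-z^{-2}+3z^{-4})$ at $z=\delta/c\ge 2$ yields $\mathbf{E}_\tau[P]\ge\frac{\sigma^3}{\sqrt{2\pi}}e^{-1/(2\sigma^2)}(1-3\sigma^2-\varepsilon)$ for a tiny $\varepsilon$, and $\delta\ge 2c$ forces $\sigma\le\frac{1}{2}$, so the bracket exceeds $\frac{1}{5}$, giving $\mathbf{E}_\tau[P]\ge\frac{1}{5(\delta/c)^3\sqrt{2\pi}}e^{-(\delta/c)^2/2}$ and hence $\prob{g(\mathbf{x})=_T g(\mathbf{y})}\le p_2(\delta/c)$.

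The main obstacle is sharpness: the coarse bound $P\le\frac{1}{2}$ loses a factor $z^2$ and cannot reproduce the $1/z^3$ factor in $p_1,p_2$, so both directions hinge on (i) exploiting the linearly vanishing (triangular) shape of $P$ near $|\tau|=1$ rather than $P\le\frac{1}{2}$, and (ii) the enveloping expansion of the Mills ratio out to the $z^{-4}$ term. The constant $5$ and the hypothesis $\delta\ge 2c$ are precisely what is needed to dominate the $3\sigma^2$ correction (and the negligible contributions of the bump beyond $\tau=2$ and of the higher period-$4$ bumps) in the far-point direction. I would also invoke the elementary monotonicity of the collision probability in $r$ to justify reducing each direction to its extreme distance, $r=1$ and $r=c$ respectively.
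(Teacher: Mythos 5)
Your proposal is correct and follows essentially the same route as the paper: the $2$-stability reduction to a one-dimensional Gaussian, the exact triangular-wave form of the conditional mismatch probability (zero for projected gaps below $\delta$ and majorized by $\frac{t-\delta}{2\delta}$), the upper and lower bounds obtained by integrating that majorant over $(\delta,\infty)$ and the exact triangle over $(\delta,2\delta)$ respectively, and the enveloping Mills-ratio expansion to extract the $y^{-3}e^{-y^2/2}$ factors, with the constant $5$ and the hypothesis $\delta\ge 2c$ absorbing the $3y^{-2}$ correction exactly as in the paper's Lemma on $\phi$. The only cosmetic differences are your normalization $\tau=t/\delta$ and your explicit appeal to monotonicity in $r$ to reduce to the extreme distances $r=1$ and $r=c$, a step the paper performs silently by substituting $x=1$ and $x=c$ into its bounds.
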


Before proving this theorem, we comment on the improvement that an $(1,c,p_1(\delta),p_2(\delta/c))$-TLSH family offers over a $(1,c,p_l,p_u)$-LSH family. One way to compare the two hashing schemes is to compare the values of $\rho:= \log{p_l}/\log{p_u}$. We note that when $\delta$ is large, both $p_l$ and $p_u$ are close to $1$. In fact, for applications in section \ref{appnns} we set $\delta = O(\sqrt{\log{\log{n}}})$. Hence in this regime we can use the the approximations $\log{1/p_l} \approx 1 - p_l$, $\log{1/p_u} \approx 1 - p_u$. We get $\log{p_l}/\log{p_u} \approx (1/c^3)e^{-\delta^2(c^2-1)/(2c^2)}$. If we set $\delta = O(\sqrt{\log{\log{n}}})$  then $\rho$ decreases to $0$ unbounded as a function of $n$. On the other hand, Motwani \etal have proved that $\log{p_l}/\log{p_u} \geq 0.5/c^2$ for any LSH family \cite{MNP06}. Hence we get an unbounded improvement in the parameter $\rho$ of a locality sensitive hashing family by using TLSH in the range of parameter $\delta$ which is of interest.

In order to prove Theorem 3.1 we first introduce some notation and prove some subsidiary lemmas. The applications in section \ref{appnns} use the statement of the theorem but are independent of the proof.

Consider two points ${\bf s}$, ${\bf q}$ in $\mathds{R}^d$. Let ${\bf x} = {\bf s-q}$, $x = \norm{{\bf x}}$. Let $\Psi({\bf x})$  denote the ``collision probability'' of $s$ and $q$, i.e. $\Psi({\bf x}) = \probomega{g({\bf s})=_Tg({\bf q})\text{ }|\text{ }{\bf s-q= x }}$ and $\psi(t)$ denote the collision probability conditioned on the fact that $|{\bf a \cdot x }| = t$, i.e. $\psi(t) = \prob{g({\bf s}) =_T g({\bf q})\text{ }|\text{ }|{\bf a \cdot x }| = t}$.  We have  $\Psi({\bf x}) = \int_0^{\infty} \psi(t)\pi_x(t)dt$ where $\pi_x(t)$ is the density of the random variable $|{\bf a \cdot x }|$. Let $\bar{\Psi}({\bf x}) = 1 - \Psi({\bf x}) $ and $\bar{\psi}(t) = 1 - \psi(t)$. Let $\bar{F}$ denote the complementary cumulative distribution function of $\mathcal{N}(0,1)$.  Let $\phi(y)  = e^{-(y^2/2)} /(y\sqrt{2\pi}) - \bar{F}(y)$. The following lemma proves lower and upper bounds on the collision probability $\Psi$. 

\begin{lemma}\label{sim} For all $\delta > 0$, 
$\phi(\frac{\delta}{x})-\phi(\frac{2\delta}{x}) \leq \bar{\Psi}({\bf x}) \leq \phi(\frac{\delta}{x})$. 
\end{lemma}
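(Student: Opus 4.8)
The plan is to first pin down the conditional non-collision probability $\bar{\psi}(t)$ and then average it against the density $\pi_x$. Conditioning on $|{\bf a}\cdot{\bf x}| = t$, the only remaining randomness is the offset $b$, and two ternions fail to match under $=_T$ exactly when one point lands in a ``$0$''-cell and the other in a ``$1$''-cell. I would first observe that shifting $b$ by $2\delta$ interchanges the labels $0$ and $1$ while fixing every $*$, and that the event ``one point is $0$ and the other is $1$'' is invariant under this interchange; hence averaging $b$ over $(0,2\delta)$ is the same as averaging over a full period $(0,4\delta)$, and I may treat the projected positions as having uniform phase modulo $4\delta$. A direct measure computation on the period pattern $0,*,1,*$ then yields that $\bar{\psi}$ is a ``tent wave'': it vanishes on $[0,\delta]$, rises linearly to $\tfrac12$ at $t=2\delta$, falls back to $0$ at $t=3\delta$, stays $0$ on $[3\delta,5\delta]$, and repeats with period $4\delta$.

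Next I would convert everything into a common currency. Since the $a_i$ are i.i.d.\ $\mathcal{N}(0,1)$, the variable $|{\bf a}\cdot{\bf x}|$ is distributed as $x|Z|$ with $Z$ standard normal, so $\bar{\Psi}({\bf x}) = \int_0^\infty \bar{\psi}(t)\pi_x(t)\,dt = E\,\bar{\psi}(x|Z|)$. Using $\phi'(y) = -e^{-y^2/2}/(y^2\sqrt{2\pi})$ together with the elementary identity $\int_y^\infty \bar{F} = e^{-y^2/2}/\sqrt{2\pi} - y\bar{F}(y) = y\,\phi(y)$, a substitution gives the key relation $E\,(|{\bf a}\cdot{\bf x}| - m\delta)^+ = 2m\delta\,\phi(m\delta/x)$ for every integer $m \ge 1$. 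The idea is then to write the tent wave as a superposition of ramps, $\bar{\psi}(t) = \tfrac{1}{2\delta}\sum_{k\ge 0}\big[(t-(4k+1)\delta)^+ - 2(t-(4k+2)\delta)^+ + (t-(4k+3)\delta)^+\big]$, so that taking expectations turns $\bar{\Psi}$ into an explicit series in the quantities $\phi(m\delta/x)$.

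The upper bound is then routine: truncating the ramp expansion after its first term gives the pointwise domination $\bar{\psi}(t) \le (t-\delta)^+/(2\delta)$, and taking expectations yields $\bar{\Psi}({\bf x}) \le \phi(\delta/x)$. For the lower bound I would retain the first complete tent, i.e.\ the $k=0$ block truncated after three ramps; since all remaining tents are nonnegative this produces a valid lower estimate $\bar{\Psi}({\bf x}) \ge \phi(\delta/x) - 4\phi(2\delta/x) + 3\phi(3\delta/x)$. Recovering the cleaner stated bound $\phi(\delta/x) - \phi(2\delta/x)$ then amounts to controlling the correction terms $\phi(2\delta/x)$ and $\phi(3\delta/x)$, which decay like Gaussians in the regime $\delta/x \ge 2$ dictated by the hypothesis $\delta \ge 2c$.

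I expect the lower bound to be the main obstacle. The upper bound follows from a single pointwise inequality, whereas the lower bound is essentially tight and cannot come from a pointwise estimate: successive partial sums of the ramp expansion bracket $\bar{\psi}$ from above or below according to where one truncates (the tent is piecewise linear, so the first ramp overshoots while the first full tent undershoots), and the closed form $\phi(\delta/x) - \phi(2\delta/x)$ only emerges after estimating the Gaussian-small residual terms $\phi(2\delta/x),\phi(3\delta/x)$ and the discarded later tents via the monotone decay of $\bar{F}(t/x)$. This sign and tail bookkeeping, rather than any single hard inequality, is where the delicacy lies.
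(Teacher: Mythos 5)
Your setup is sound and in places more careful than the paper's: the reduction of the $b$-average over $(0,2\delta)$ to a full-period average, the tent-wave formula for $\bar{\psi}$ (with the value $\frac{3\delta-t}{2\delta}$ on $(2\delta,3\delta)$; the paper's $\frac{3\delta-t}{\delta}$ there is a typo, as its own appeal to symmetry about $2\delta$ shows), the $2$-stability step, the identity $E\left[(|{\bf a}\cdot{\bf x}|-m\delta)^{+}\right]=2m\delta\,\phi(m\delta/x)$, and the ramp decomposition are all correct, and your upper bound $\bar{\Psi}({\bf x})\leq\phi(\delta/x)$ is complete and matches the paper's. The gap is the last step of your lower bound. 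Your first-tent estimate $\bar{\Psi}\geq\phi(\delta/x)-4\phi(2\delta/x)+3\phi(3\delta/x)$ is correct, but upgrading it to $\phi(\delta/x)-\phi(2\delta/x)$ requires recovering $3\left(\phi(2\delta/x)-\phi(3\delta/x)\right)$, which is strictly positive because $\phi'(y)=-e^{-y^2/2}/(y^2\sqrt{2\pi})<0$; the only resource available is the discarded tents, whose total contribution is at most $\sum_{k\geq1}\bar{F}((4k+1)\delta/x)$ and is Gaussian-smaller than $\phi(2\delta/x)$ for any $\delta/x$ bounded away from $0$. The sign of the residual is against you, so the bookkeeping you defer cannot be made to close.

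In fact your own expansion refutes the stated inequality: summing all tents gives the exact value $\bar{\Psi}=\sum_{k\geq0}\left[(4k+1)\phi((4k+1)y)-2(4k+2)\phi((4k+2)y)+(4k+3)\phi((4k+3)y)\right]$ with $y=\delta/x$, which at $y=1$ is about $0.0667$ versus $\phi(1)-\phi(2)\approx0.0791$, and even at $y=2$ is about $4.2383\times10^{-3}$ versus $\phi(2)-\phi(4)\approx4.2436\times10^{-3}$; the lemma's lower bound is false as written, so no proof of it can succeed. The paper's argument conceals this behind an algebra slip: it asserts $\sqrt{2/\pi}\int_{\delta}^{2\delta}\frac{t-\delta}{2\delta}\frac{1}{x}e^{-t^2/(2x^2)}dt=\phi(\delta/x)-\phi(2\delta/x)$, whereas the integral equals $\phi(\delta/x)-\phi(2\delta/x)-\frac{x}{2\delta\sqrt{2\pi}}e^{-2\delta^2/x^2}$; the missing term is exactly the factor-of-two mismatch between $e^{-2y^2}/(y\sqrt{2\pi})$ and the $e^{-2y^2}/(2y\sqrt{2\pi})$ sitting inside $\phi(2y)$. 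The correct conclusion of your method is the slightly weaker bound $\bar{\Psi}\geq\phi(\delta/x)-4\phi(2\delta/x)$, which for $\delta/x\geq2$ still gives $\bar{\Psi}\geq\left(\frac{1}{4}-\frac{1}{2}e^{-3\delta^2/(2x^2)}\right)\frac{x^3}{\delta^3\sqrt{2\pi}}e^{-\delta^2/(2x^2)}\geq\frac{1}{5\sqrt{2\pi}}\frac{x^3}{\delta^3}e^{-\delta^2/(2x^2)}$, so Lemma 3 and Theorem 3.1 survive with the same constants once the lemma statement is repaired; but neither your argument nor the paper's establishes the lemma in its present form.
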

\begin{proof}
First we recall the definition of stability of random variables.
A distribution $\mathcal{D}$ over $\mathds{R}$ is called $p$-stable if there exists $p \geq 0$ such that for any $n$ real numbers $v_1,v_2\ldots v_n$ and i.i.d. random variables $X_1,X_2,\ldots X_n$ with distribution $\mathcal{D}$, the random variable $\sum_i v_iX_i$ has the same distribution as the variable ${(\sum_i|v_i|^p)}^{1/p}X$, where $X$ is a random varible with distribution $\mathcal{D}$. Using the well known fact that the Normal distribution $\mathcal{N}(0,1)$ is $2$-stable, we conclude that the random variable $|{\bf a \cdot x}|$ is distributed as $x \cdot |\mathcal{N}(0,1)|$ which implies $\pi_x(t) = \frac{\sqrt{2}}{x\sqrt{\pi}}e^{-\frac{t^2}{2x^2}}$. 

If two points $\bf s$, $\bf q$, are such that $|{\bf a \cdot x}| \leq \delta$ then they are hashed to matching TCAM values, i.e. $g_{{\bf a},b}( {\bf s}) =_T g_{{\bf a},b}({\bf q})$, since the adjacent hyperplanes are at a distance of $\delta$ from each other and alternate regions are hashed to $*$ and $0,1,0,1\ldots$. Hence $\bar{\psi}(t) = 0$ if $t \in (0,\delta)$. In fact, if $t \in (0,2\delta)$, then $\bar{\psi}(t) = \frac{t-\delta}{2\delta}1_{t > \delta}$,  where $1_{t > \delta}$ is the indicator function ($1$ if $t > \delta$, $0$ otherwise). Symmetry about $2\delta$ implies that if $t \in (2\delta,4\delta)$, then $\bar{\psi}(t) = \left(\frac{3\delta- t}{\delta}\right)1_{t \leq 3\delta}$. Also note that the function $\psi(t)$ is periodic with period $4\delta$. So $\bar{\psi}(t+4k\delta) = \bar{\psi}(t)$, for all positive integers $k$. 

Now $\bar{\Psi}({\bf x}) = \int_{\delta}^{\infty}\bar{\psi}(t)\pi_x(t)dt >  \int_{\delta}^{2\delta}\bar{\psi}(t)\pi_x(t)dt$. Using $\bar{\psi}(t) = \frac{t-\delta}{2\delta}$ when $t \in (\delta,2\delta)$, we get \begin{equation*}
\bar{\Psi}(x) > \sqrt{\frac{2}{\pi}}\int_{\delta}^{2\delta}\frac{t-\delta}{2\delta}\frac{e^{-\frac{t^2}{2x^2}}}{x}dt = \phi(\frac{\delta}{x}) - \phi(\frac{2\delta}{x}).\end{equation*}

To prove the latter inequality, we use the fact that $\bar{\psi}(t) \leq \frac{t-\delta}{2\delta}$,  $\forall t > \delta$. Hence we have that 
\begin{equation*} \begin{array}{ll}
 \bar{\Psi}(x) &= \int_{\delta}^{\infty}\bar{\psi}(t)\pi_x(t)dt \\
\bar{\Psi}(x) &\leq  \frac{\sqrt{2}}{\sqrt{\pi}}\int_\delta^\infty\left(\frac{t-\delta}{2\delta}\right) \frac{1}{x}e^{-\frac{t^2}{2x^2}}dt \\
&= \frac{1}{\sqrt{2\pi}}\frac{e^{-\frac{\delta^2}{2x^2}}}{\frac{\delta}{x}} - \bar{F}\left(\frac{\delta}{x}\right) \\
&= \phi \left(\frac{\delta}{x}\right).
\end{array}
\end{equation*}
\end{proof}

The lemma \ref{lem-bound-phi} specifies appropriate bounds for the function $\phi$.

\begin{lemma}\label{lem-bound-phi}The function $\phi$ is bounded above and below as follows:
\begin{equation}
\frac{1}{4\sqrt{2\pi}}\frac{e^{-\frac{y^2}{2}}}{y^3} \leq \phi(y) \leq \frac{1}{\sqrt{2\pi}}\frac{e^{-\frac{y^2}{2}}}{y^3}, \label{bound-phi} \\
\end{equation}
where the first inequality holds if $y \geq 2$. Hence for all $y \geq 2$,  \begin{equation}
\phi(y) - \phi(2y) \geq \frac{1}{5\sqrt{2\pi}}\frac{e^{-\frac{y^2}{2}}}{y^3}. \label{lbub} \end{equation}
\end{lemma}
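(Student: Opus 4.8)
The plan is to first rewrite $\phi$ as a single manageable integral, and then bound that integral by repeated integration by parts. Starting from $\bar{F}(y) = \frac{1}{\sqrt{2\pi}}\int_y^\infty e^{-t^2/2}\,dt$, I would integrate by parts (writing $e^{-t^2/2} = \frac{1}{t}\cdot t e^{-t^2/2}$) to obtain $\sqrt{2\pi}\,\bar{F}(y) = \frac{e^{-y^2/2}}{y} - \int_y^\infty \frac{e^{-t^2/2}}{t^2}\,dt$. Substituting this into the definition $\phi(y) = \frac{e^{-y^2/2}}{y\sqrt{2\pi}} - \bar{F}(y)$ makes the leading term cancel, leaving the clean identity $\phi(y) = \frac{1}{\sqrt{2\pi}}\int_y^\infty \frac{e^{-t^2/2}}{t^2}\,dt$. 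In particular $\phi$ is positive, and all the remaining work is to sandwich this integral between $\frac{1}{4}\frac{e^{-y^2/2}}{y^3}$ and $\frac{e^{-y^2/2}}{y^3}$.

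For the upper bound I would integrate by parts once more, this time writing $\frac{e^{-t^2/2}}{t^2} = \frac{1}{t^3}\cdot t e^{-t^2/2}$, to get $\int_y^\infty \frac{e^{-t^2/2}}{t^2}\,dt = \frac{e^{-y^2/2}}{y^3} - 3\int_y^\infty \frac{e^{-t^2/2}}{t^4}\,dt$. Since the remainder integral is nonnegative, the upper bound $\phi(y) \le \frac{1}{\sqrt{2\pi}}\frac{e^{-y^2/2}}{y^3}$ follows immediately, and in fact holds for every $y>0$.

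The lower bound requires controlling that same remainder. Because $t \ge y$ on the range of integration, I would use $1/t^4 \le \frac{1}{y^2}\cdot 1/t^2$ to bound $\int_y^\infty \frac{e^{-t^2/2}}{t^4}\,dt \le \frac{1}{y^2}\int_y^\infty \frac{e^{-t^2/2}}{t^2}\,dt$. Writing $I := \int_y^\infty \frac{e^{-t^2/2}}{t^2}\,dt$, the integration-by-parts identity then reads $I \ge \frac{e^{-y^2/2}}{y^3} - \frac{3}{y^2}I$, which rearranges to $I \ge \frac{y^2}{y^2+3}\cdot\frac{e^{-y^2/2}}{y^3}$. Since $y^2/(y^2+3)$ is increasing and already exceeds $1/4$ once $y \ge 1$ (so certainly for $y \ge 2$), this gives the claimed lower bound $\phi(y) \ge \frac{1}{4\sqrt{2\pi}}\frac{e^{-y^2/2}}{y^3}$.

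Finally, for inequality (\ref{lbub}) I would combine the two bounds: the lower bound on $\phi(y)$ together with the upper bound $\phi(2y) \le \frac{1}{\sqrt{2\pi}}\frac{e^{-2y^2}}{8y^3}$. Factoring out $\frac{1}{\sqrt{2\pi}}\frac{e^{-y^2/2}}{y^3}$ leaves the scalar $\frac{1}{4} - \frac{1}{8}e^{-3y^2/2}$, and since $e^{-3y^2/2} \le e^{-6}$ for $y \ge 2$ the subtracted term is negligible, so this scalar exceeds $1/5$. The only places demanding care are getting the signs and the factor of $3$ right in the second integration by parts, and checking that the cross term in the last step is small enough that $1/4$ minus it still clears $1/5$; neither is a serious obstacle, so the main content is really the single identity $\phi(y) = \frac{1}{\sqrt{2\pi}}\int_y^\infty t^{-2}e^{-t^2/2}\,dt$ that drives everything.
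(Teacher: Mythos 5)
Your proof is correct and follows essentially the same route as the paper: the paper simply invokes the integration-by-parts expansion of the error function to get \eqref{bound-phi} and then computes $\phi(y)-\phi(2y) \geq \bigl(\tfrac{1}{4}-\tfrac{1}{8}e^{-3y^2/2}\bigr)\tfrac{1}{y^3\sqrt{2\pi}}e^{-y^2/2} \geq \tfrac{1}{5\sqrt{2\pi}}\tfrac{e^{-y^2/2}}{y^3}$ for $y\geq 2$, exactly as you do in your final step. The only (harmless) variation is in the lower bound, where you control the remainder via the self-referential inequality $I \geq \tfrac{e^{-y^2/2}}{y^3}-\tfrac{3}{y^2}I$ giving the factor $\tfrac{y^2}{y^2+3}$, rather than the standard alternating-truncation bound $1-\tfrac{3}{y^2}$; both clear $\tfrac14$ at $y\geq 2$.
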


\begin{proof}
The expansion of the error function using integration by parts \cite{AS72} proves \eqref{bound-phi}.
Using \eqref{bound-phi} we get $\phi(y)-\phi(2y) \geq (\frac{1}{4}-\frac{1}{8}e^{-\frac{3y^2}{2}})\frac{1}{y^3\sqrt{2\pi}}e^{-\frac{y^2}{2}}$. Using $y \geq 2$ proves the lemma.
\end{proof}

Now we return to the proof of the main theorem. \\

\begin{proofof}{Theorem 3.1}

Using lemma \ref{sim} and \eqref{bound-phi}, $\forall x \leq 1$, we have \begin{equation} \label{upperb-psi-loose}
\bar{\Psi}(x) \leq \phi\left(\frac{\delta}{x}\right)\leq \frac{1}{\sqrt{2\pi}}\frac{x^3}{\delta^3}e^{-\frac{\delta^2}{2x^2}} \leq \frac{1}{\sqrt{2\pi}}\frac{e^{-\frac{\delta^2}{2}}}{\delta^3}. \end{equation}
Also using lemma \ref{sim} and \eqref{lbub}, $\forall x \geq c$ and $\delta \geq 2c$, we have \begin{equation}\begin{array}{ll} \label{lowerb-psi-loose} \bar{\Psi}(x) &\geq \phi\left(\frac{\delta}{x}\right)-\phi\left(\frac{2\delta}{x}\right) \\  &\geq \phi\left(\frac{\delta}{c}\right)-\phi\left(\frac{2\delta}{c}\right) \\
&\geq  \frac{1}{5\sqrt{2\pi}}\frac{c^3}{\delta^3}e^{-\frac{\delta^2}{2c^2}}.\end{array}\end{equation} This proves the theorem \ref{main}.

\end{proofof}

Note that using standard bounds on the complimentary cumulative distribution function of the standard normal random variable $\mathcal{N}(0,1)$  \cite{AS72}, the bounds on $\phi$ can be improved as follows: $\forall y$, we have \begin{equation} \begin{array}{l} \label{bound-phi_tight}\left(\frac{\frac{4}{\pi}}{y^2+y\sqrt{y^2+\frac{8}{\pi}}+\frac{4}{\pi}}\right)\frac{e^{-\frac{y^2}{2}}}{y\sqrt{2\pi}} \leq \phi(y) 
 \leq \left(\frac{2}{y^2+y\sqrt{y^2+4}+2}\right)\frac{e^{-\frac{y^2}{2}}}{y\sqrt{2\pi}}.\end{array}\end{equation} It can be verified using standard plotting packages like Maple or Matlab that for small values of $n$ and $1/\epsilon$ these bounds are in fact tighter than the bounds presented in \eqref{bound-phi}. However it is not clear how these stronger bounds can be used to obtain an improvement in Theorem 3.1. Analysis using these bounds is complicated and moreover, asymptotically these bounds have the same behaviour as the bounds in \eqref{bound-phi}. Hence we present the analysis using simpler bounds as presented in the lemma's above but we recommend the use of tighter bounds for parameter tuning and experiments as illustrated in section \ref{simu:expt}. 

\section{Approximate Similarity Search}\label{appnns}
In this section we demonstrate the use of $\mathcal{G_{\delta}}$ to solve the $(1,c)$-NN problem and the $c$-ANNS problem on a set $S$ consisting of $n$ points in $\mathds{R}^d$ using a TCAM of width $w$ for some appropriate choice of parameters $\delta$ and $w$. We note that the results of this section can be extended to solve the $(1,c)$-SS problem by requiring the TCAM to output all matching data points to a query point. 

\subsection{The $(1,c)$-NN Problem}\label{1cnn}
In this section we formulate an algorithm to solve the $(1,c)$-NN problem. The choice of parameters $\delta$ and $w$ is specified later. \\

\noindent {\bf Algorithm A}
\begin{itemize}

\sloppy \item {\bf Pre-processing (TCAM Setup):} Choose $w$ independent hash functions $g_1,g_2, \ldots g_w \in \mathcal{G}_{\delta}$ where $\mathcal{G}_{\delta}$ is a $(1,c,p_1(\delta),p_2(\delta/c))$-TLSH family as defined in section \ref{da}. For every ${\bf s}_i \in S$, find its TCAM representation $T(s_i):=(g_1({\bf s}_i),g_2({\bf s}_i), \ldots g_w({\bf s}_i))$. 
\item {\bf Query lookup:} Given a query ${\bf q}$ find its TCAM representation $T({\bf q})$ (using the same hash functions). Perform a TCAM lookup of $T({\bf q})$. If the TCAM returns a point ${\bf s_T}$ such that $\norm{\bf q-s_T}\leq c$, return ``YES'' and ${\bf s_T}$, otherwise return ``NO''. 
\end{itemize}
%Note that this algorithm can be generalized to solve the $(l,c)$-NN problem using a $(l,u,p_l,p_u)$-TLSH family with $u = cl$.
\fussy Intuitively choosing a large $w$ (i.e. a large no. of hash functions) reduces the possibility of having false positives in the output but at the same time increases the chances of a false negative occurring because any one (or more than one) of the $w$ TCAM ternions can produce a false negative. Choosing a large value of $\delta$ reduces the false negative probability but increases the likelihood of having false positives. We show in the following theorem that it is possible to tune these parameters simultaneously to ensure that the false negative probability is small and the expected number of false positives is also small.

\begin{theorem} Consider a set $S$ consisting of $n$ points in $\mathds{R}^d$. \label{thm4.1}
\begin{enumerate}
\sloppy \item {\bf {\it One} TCAM lookup:} The $(1,c)$-NN problem can be solved by using a TCAM of width $w$ where  
$w = O({\left( \frac{1}{\epsilon} \log{\frac{n}{\epsilon}}\right)}^{\frac{c^2}{c^2-1}}{\left(\log{\left(\frac{1}{\epsilon}\log{\frac{n}{\epsilon}}\right)}\right)}^{\frac{3}{2}}(c^2-1)^{-\frac{3}{2}})$with error probability at most $\epsilon$ using exactly $1$ TCAM lookup and $1$ distance computation in $\mathds{R}^d$. 
\item {\bf $O(\log{(1/\epsilon)})$ TCAM lookups:} The $(1,c)$-NN problem can be solved by a TCAM of width $w = O({(\log{n})}^{\frac{c^2}{c^2-1}}{(\log{\log{n}})}^{\frac{3}{2}}(c^2-1)^{-\frac{3}{2}})$ with error probability at most $\epsilon$ using  $O\left(\log{\frac{1}{\epsilon}}\right)$ TCAM lookups and $O\left(\log{\frac{1}{\epsilon}}\right)$ distance computations in $\mathds{R}^d$.  
\item {\bf Word size $O(\log{n})$:} If $c^2 \geq \log{\left( \frac{k}{\epsilon} \log{\frac{n}{\epsilon}}\right)}$ where $k \geq 1/(1-p_2(2))$, a constant, the $(1,c)$-NN problem can be solved with error probability at most $\epsilon$ using a TCAM of width $k\log{(n/\epsilon)}$.
\end{enumerate}
\end{theorem}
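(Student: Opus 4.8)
The plan is to analyze Algorithm A by separating the two ways it can err and bounding each via the TLSH guarantees of Theorem \ref{main}. Fix a worst-case query $q$. First I would observe that the ``No'' case is never wrong: if every point of $S$ is farther than $c$ from $q$, then whatever highest-priority matching entry $s_T$ the TCAM returns satisfies $\norm{q - s_T} > c$, so the distance check fails and Algorithm A correctly reports ``No''. Hence all error is confined to the ``Yes'' case, where a witness $s_l$ with $\norm{s_l - q} \le 1$ exists.

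Next, for the ``Yes'' case I would isolate a sufficient condition for correctness: if $s_l$ matches $q$ in all $w$ ternions (no \emph{false negative}) and no point at distance $> c$ matches $q$ in all $w$ ternions (no \emph{false positive}), then every matching entry lies within distance $c$, so $s_T$ passes the check and the answer is correct. Using independence of the $w$ hash functions and the TLSH bounds, $\Pr[s_l \text{ fails to match}] \le 1 - p_1(\delta)^w$ and, by a union bound over at most $n$ far points, $\Pr[\text{some far point matches}] \le n\,p_2(\delta/c)^w$. Writing $A := 1 - p_1(\delta) = \frac{1}{\delta^3\sqrt{2\pi}}e^{-\delta^2/2}$ and $B := 1 - p_2(\delta/c) = \frac{c^3}{5\delta^3\sqrt{2\pi}}e^{-\delta^2/(2c^2)}$ and using $(1-A)^w \ge 1 - wA$ and $(1-B)^w \le e^{-wB}$, the total error is at most $wA + n e^{-wB}$.

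The crux is to make both terms at most $\epsilon/2$ simultaneously. The false-positive term forces $w \ge \log(2n/\epsilon)/B$, while the false-negative term forces $w \le \epsilon/(2A)$; these are compatible exactly when $A/B \le \epsilon/(2\log(2n/\epsilon))$. Since $A/B = \tfrac{5}{c^3}\exp\!\big({-}\tfrac{\delta^2(c^2-1)}{2c^2}\big)$, this is achieved by setting $\delta^2 = \Theta\!\big(\tfrac{c^2}{c^2-1}\log(\tfrac1\epsilon\log\tfrac n\epsilon)\big)$, after which I take $w = \lceil \log(2n/\epsilon)/B\rceil$ and substitute $1/B = \tfrac{5\sqrt{2\pi}}{c^3}\delta^3 e^{\delta^2/(2c^2)}$. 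Simplifying—the prefactor $c^{-3}$ cancels the $(c^2)^{3/2}$ hidden in $\delta^3$; the term $e^{\delta^2/(2c^2)}$ becomes $(\tfrac1\epsilon\log\tfrac n\epsilon)^{\Theta(1/(c^2-1))}$, which combines with the $\log(2n/\epsilon)$ prefactor to produce the exponent $\tfrac{c^2}{c^2-1}$; and the remaining $\delta^3$ supplies the $(\log(\tfrac1\epsilon\log\tfrac n\epsilon))^{3/2}(c^2-1)^{-3/2}$ factor—yields the stated width for part 1. I expect this bookkeeping, namely tracking how the $e^{\delta^2/(2c^2)}$ and $\delta^3$ factors collapse into the exponent $\tfrac{c^2}{c^2-1}$ and the $(c^2-1)^{-3/2}$ term, to be the main obstacle, since the algorithmic argument itself is short.

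For parts 2 and 3 I would reuse part 1 as a black box. For part 2, set the per-lookup error to a constant $\epsilon_0$; then $M := \tfrac{1}{\epsilon_0}\log\tfrac{n}{\epsilon_0} = O(\log n)$ and part 1 gives width $O((\log n)^{c^2/(c^2-1)}(\log\log n)^{3/2}(c^2-1)^{-3/2})$ with constant ``Yes''-case failure probability. Running $L = O(\log(1/\epsilon))$ independent instances and answering ``Yes'' as soon as any instance returns a verified point drives the failure probability (again only in the ``Yes'' case) down to $\epsilon_0^{\,L} \le \epsilon$, at the cost of $O(\log(1/\epsilon))$ lookups and distance computations. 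For part 3, I would take $\delta = 2c$, the smallest value allowed by Theorem \ref{main}, so that $B = 1 - p_2(2)$ is a fixed constant; then $w = k\log(n/\epsilon)$ with $k \ge 1/(1-p_2(2))$ gives $n e^{-wB} \le \epsilon$, while the hypothesis $c^2 \ge \log(\tfrac{k}{\epsilon}\log\tfrac n\epsilon)$ forces $A \sim e^{-2c^2}$ to be so small that $wA$ is negligible, so that the total error is at most $\epsilon$ (absorbing the negligible false-negative term into the constant $k$) with word size $O(\log n)$.
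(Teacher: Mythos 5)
Your proposal is correct and follows essentially the same route as the paper: the same decomposition into false negatives (union bound over the $w$ ternions, giving the $wA$ term) and false positives (union bound/Markov over the $n$ far points, giving the $ne^{-wB}$ term), the same balancing condition $B/A = \Theta(\tfrac1\epsilon\log\tfrac n\epsilon)$ determining $\delta$ and hence $w$, the same constant-error-plus-repetition argument for part 2, and the same choice $\delta = \Theta(c)$ with the hypothesis on $c^2$ absorbing the false-negative term for part 3. The one point you make more explicit than the paper is that the ``No'' case cannot err and that a false positive only hurts by outranking a true near neighbor, which the paper handles implicitly via its properties P1 and P2.
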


\fussy Before proving Theorem 4, we discuss the improvements it provides over existing methods to solve the $(1,c)$-NN problem.   
\begin{enumerate}
\item{{\bf Constant separation} $c$:}
\sloppy Existing approaches to solve the $(1,c)$-NN problem can be broadly classified into three categories depending on their space requirements as a function of $n$: polynomial, sub quadratic, and near linear. Using the dimensionality reduction approach proposed by Ailon and Chazelle \cite{AC06} and ignoring the dependence on $\epsilon$, it is possible to solve the $(1,c)$-NN problem with a query time of $O(d\log{d}+(c-1)^{-3}\log^2{n})$ using a data structure of size $d^2n^{O(1/{(c-1)}^2)}$ i.e. polynomial in $n$. The space requirement of $n^{O(1/{(c-1)}^2)}$ is optimal in the sense that any data structure which solves $(1,c)$-NN problem with a constant number of probes must use $n^{\Omega(1/{(c-1)}^2)}$ space \cite{AI06}. However, the extremely large space requirement when $c$ is close to $1$ seems to render this approach impractical. An alternative approach based on the optimal LSH family \cite{AI06} proposed by Andoni and Indyk can be used to solve the $(1,c)$-NN Problem using a data structure with sub quadratic space requirement and a constant probability of success. Their approach has a query time of $O(dn^{1/c^2})$ and space requirement of $O(dn^{1+1/c^2}\log{n})$ where the dependence on $\epsilon$ has been ignored. To the best of our knowledge, their algorithm minimizes the query time when the size of the data structure is limited to be sub quadratic in $n$. The optimal LSH family \cite{AI06} can also be used to formulate an algorithm which solves the $(1,c)$-NN problem with a data structure which is near linear in size and has a query time of $dn^{O(1/c^2)}$, using the algorithm proposed by Panigrahy \cite{P06}. These upper bounds reveal the trade off involved between the space requirement and the query time while solving the $(1,c)$-NN problem using LSH. In contrast with these results using [Theorem \ref{thm4.1},1], we can formulate a TCAM based data structure which has $O({\left( \frac{1}{\epsilon}\log{\frac{n}{\epsilon}}\right)}^{\frac{c^2}{c^2-1}}{\left(\log{\left(\frac{1}{\epsilon}\log{\frac{n}{\epsilon}}\right)}\right)}^{\frac{3}{2}}(c^2-1)^{-\frac{3}{2}})$ word size and solves the $(1,c)$-NN problem in just {\it one} TCAM lookup and {\it one} distance computation in $\mathds{R}^d$. Ignoring the dependence on $\epsilon$, we conclude that a TCAM based data structure requires word size $O({\log{n}}^{c^2/(c^2-1)}{(\log{\log{n}})}^{3/2}(c^2-1)^{-3/2})$ to solve the $(1,c)$-NN problem with query time $O(1)$. The width of the TCAM varies with $\epsilon$ as $\epsilon^{-c^2/(c^2-1)}$ which leads to large values of the width when $\epsilon$ is small. One work around is to use a TCAM of width $O({(\log{n})}^{c^2/(c^2-1)}{(\log{\log{n}})}^{3/2}(c^2-1)^{-3/2})$ and repeat the algorithm  $O(\log{(1/\epsilon)})$ times [Theorem \ref{thm4.1},2]. For instance, $n = 10^6$ and $c=2$ requires a TCAM of width $3.3$K bits and $1$ lookup per query to succeed with probability $90\%$ using the tight bounds in \eqref{bound-phi_tight}. But allowing $4$ lookups per query, the width of the TCAM required can be brought down to $1.7$K bits. We explore the trade-off between the width of the TCAM and accuracy of algorithm {\bf A} while using data sets consisting of a $n=10^6$ points in a practical setting in section \ref{simu:expt}. 

In fact, Panigrahy \etal \cite{PTW08} showed that any data structure in the cell probe model \cite{Yao:cellprobe} which uses a single probe to solve the $(1,c)$-NN  problem with constant probability has a space requirement of $n^{1+\Omega(1/c^2)}$. Hence a data structure which uses near linear space needs to be probed $\Omega\left(\log{n} / \log{\log{n}}\right)$ times. Clearly, the TCAM based scheme which uses space $\tilde{O}(n))$ and query time $O(1)$ beats this lower bound by implementing parallel operations which do not conform with the cell probe model of computation.

\item{{\bf Word size} $O(\log{n})$:}
\fussy Consider solving the $(1,c)$-NN  problem using a RAM of word size $w = O(\log{n})$ which uses $w$ independent hash functions from the optimal $(1,c,p_l,p_u)$-LSH family \cite{AI06}. To solve the $(1,c)$-NN  problem with error probability at most $\epsilon$, we need the probability of a false negative to be at most $\epsilon$, i.e.  $1-p_l^w \leq \epsilon$ and the probability of a false positive to be at most $\epsilon$ i.e.  $p_u^w \leq \epsilon/n$ (since there are at most $n$ points with respect to which a false positive can occur). This implies that $ \frac{\log{p_u}}{\log{p_l}}  \geq (\frac{1}{\epsilon} -1) \log{\frac{n}{\epsilon}}$. Hence $ \frac{\log{p_u}}{\log{p_l}} \geq  \Omega\left(\frac{1}{\epsilon}\right)\log{(n/\epsilon)}$. Using the fact that $\frac{\log{p_u}}{\log{p_l}} \geq \frac{0.46}{c^2} $ \cite{MNP06} for any $(l,u,p_l,p_u)$-LSH family, we get $c^2 = \Omega\left(\frac{1}{\epsilon}\log{\frac{n}{\epsilon}}\right)$. Hence ``granularity'' achieved by LSH (ignoring $\epsilon$) in this case $\Omega(\sqrt{\log{n}})$.
On the other hand using [Theorem \ref{thm4.1},3] using a word size of $O(\log{n})$, algorithm {\bf A} can solve the $(1,c)$-NN problem with error probability at most $\epsilon$ if 
$c = \Omega\left(\sqrt{\log{\left( \frac{1}{\epsilon} \log{\frac{n}{\epsilon}}\right)}}\right)$. Thus, ignoring $\epsilon$, the granularity achieved by TCAM based scheme is $\Omega(\sqrt{\log{\log{n}}})$. Hence we see that use of TLSH family brings about an exponential improvement in the ``granularity'' of a $(1,c)$-NN problem.
\end{enumerate}
Again, we note that these huge improvements are brought about by the use of a TCAM which has a lot of inherent parallelism and hence the lower bounds mentioned before do not apply. Next we proceed to prove Theorem 4. \\ 
\\
\begin{proofof}{Theorem 4}\label{analysis:thm4.1}
First we make the following claims regarding the choice of parameters $\delta,w$ which prove the theorem. 
\begin{enumerate}
\item { \bf {\it One} TCAM lookup:} \sloppy If we choose $\delta = {\left(\frac{2c^2}{c^2-1}\log{\left(\frac{10}{c^3\epsilon}\log{\left(\frac{2n}{\epsilon}\right)}\right)}\right)}^{1/2}$ and $w = \frac{1}{1-p_2(\delta/c)}\log{\left(\frac{2n}{\epsilon}\right)}$, then algorithm {\bf A} solves the $(1,c)$-NN problem with error probability at most $\epsilon$.  
\item {\bf $O(\log{(1/\epsilon)})$ TCAM lookups:} Choosing $\delta$ and $w$ as in [Theorem 4,1] with error probability at most $1/2$ and repeating algorithm {\bf A} $\log{(1/\epsilon)}$ times solves the $(1,c)$-NN problem with error probability at most $\epsilon$. This can in fact be implemented using a single TCAM by using the first $O(\log{(\log{(1/\epsilon)})})$ bits of the TCAM to code the version number of the $O(\log{(1/\epsilon)})$ different data-structures to be used to solve the $(1,c)$-NN problem.  
\item {\bf Word size $O(\log{n})$:} Choosing $\delta = \alpha c$ and $w = k\log{\left(\frac{2n}{\epsilon}\right)}$ where $\alpha$ is such $k = \frac{1}{1-p_2(\alpha)}$ implies algorithm {\bf A} solves the $(1,c)$-NN problem with error probability at most $\epsilon$ when $c^2 \geq \log{\left(\frac{k}{\epsilon}\log{\frac{n}{\epsilon}}\right)}$.  
\end{enumerate}

Next we prove these claims sequentially. Let $S({\bf q},c)$ denote the set of points $\{{\bf s} \in S \text{, }\norm{{\bf s}-{\bf q}} > c\}$. We prove the theorem by analyzing the false positive and false negative cases.  
For any query point $q \in \mathds{R}^d$, note that algorithm {\bf A} will solve the $(1,c)$-NN problem correctly if the following two properties hold:
\begin{description}
\item[{\bf P1:}]{(No false negative matches)} If there exists a ${\bf s}_L$ such that $\norm{{\bf s_L}-{\bf q}}\leq 1$ then TCAM representations of ${\bf s}_L$ and ${\bf q}$ match. i.e. $T({\bf s_L}) =_{\text{T}} T({\bf q})$. 
\item[{\bf P2:}]{(No false positive matches)} For any $s_U \in S({\bf q},c)$, TCAM representations of ${\bf s_U}$ and ${\bf q}$ do not match, i.e. $T({\bf s_U}) \neq_{\text{T}} T({\bf q})$. 
%The total number of points of $S({\bf q},c)$ with TCAM representations which match that of ${\bf q}$ is $0$. 
\end{description}

\begin{enumerate}
\item
\sloppy We will show in the following analysis that it is possible to choose the parameters $\delta$ and $w$, such that $w = O({\left( \frac{1}{\epsilon} \log{\frac{n}{\epsilon}}\right)}^{\frac{c^2}{c^2-1}}{\left(\log{\left(\frac{1}{\epsilon}\log{\frac{n}{\epsilon}}\right)}\right)}^{\frac{3}{2}}(c^2-1)^{-\frac{3}{2}})$ and both properties {\bf P1} and {\bf P2} hold with probability at least $1-\epsilon$. This implies that algorithm {\bf A} succeeds with probability at least $1 - 2\epsilon$. Rescaling $\epsilon$ by $1/2$, we can conclude that algorithm {\bf A} succeeds with probability at least $1 - \epsilon$. 

\sloppy Choose\footnote{Note that $\frac{1- p_2(\delta/c)}{1 -p_1(\delta)}$ is an increasing function of $\delta$ for a fixed $c$ and hence for any $n,\epsilon$,$\exists \delta$ which satisfies this condition} $\delta$ and $w$ such that  
\begin{equation} \begin{array}{l} 
\frac{1- p_2(\delta/c)}{1 -p_1(\delta)}= \frac{c^3}{5}e^{\frac{\delta^2}{2c^2}(c^2-1)}= \frac{1}{\epsilon}\log{\left(\frac{n}{\epsilon}\right)} \\
 w = k \log{(n/\epsilon)}, \text{ where } k = \frac{1}{1-p_2(\delta/c)} \label{choice of delta} 
\end{array}
\end{equation}   
\begin{itemize}
\item The choice of $k$ is such that $p_2(\frac{\delta}{c}) = 1 - \frac{1}{k}$. This implies that the false positive probability with respect to any particular point in $S({\bf q},c)$ is at most ${\left(p_2(\frac{\delta}{c})\right)}^w={\left(1 - \frac{1}{k}\right)}^{k\log{(n/\epsilon)}}\leq \frac{\epsilon}{n}$. Hence the expected number of false positives in the output of Algorithm {\bf A} is at most $\epsilon$. By Markov inequality, the probability that the output of the TCAM is a false positive match is at most $\epsilon$. Hence the property {\bf P2} holds with probability atleast $1-\epsilon$. 

\item Using \eqref{choice of delta}, we get $p_1(\delta) = 1 - \frac{\epsilon}{w}$. 
Hence the probability of making a false negative error on any ternions is at most $\epsilon/w$. Using the union bound implies that probability of a false negative in the output of the TCAM is at most $\epsilon$. i.e. {\bf P1} holds with probability at least $1-\epsilon$. 
\end{itemize}
Now using \eqref{choice of delta}, we get 
\begin{equation}\begin{array}{l}
\frac{\delta^2}{c^2} = \frac{2}{c^2-1}\log{\left(\frac{5}{c^3\epsilon}\log{\left(\frac{n}{\epsilon}\right)}\right)}. \\ 
k = \\
O({\left( \frac{1}{\epsilon}\log{\left(\frac{n}{\epsilon}\right)}\right)}^{\frac{1}{c^2-1}}{\left(\log{\left(\frac{1}{\epsilon}\log{\left(\frac{n}{\epsilon}\right)}\right)}\right)}^{\frac{3}{2}}(c^2-1)^{-\frac{3}{2}}) \\  
w = \\
O({\left( \frac{1}{\epsilon}\log{\left(\frac{n}{\epsilon}\right)}\right)}^{\frac{c^2}{c^2-1}}{\left(\log{\left(\frac{1}{\epsilon}\log{\left(\frac{n}{\epsilon}\right)}\right)}\right)}^{\frac{3}{2}}(c^2-1)^{-\frac{3}{2}}).
\end{array}
\end{equation}  
\item
Using error probability $1/4$ in the analysis of [Theorem 4,1], we get that the algorithm succeeds with probability at least $1/2$ and the width of the TCAM required is given by  $w = O({(\log{n})}^{\frac{c^2}{c^2-1}}{(\log{\log{n}})}^{\frac{3}{2}})$. If this process is repeated $O(\log_2{(1/\epsilon)})$ times, the probability of success can be amplified to $1-\epsilon$.  
\item
Choose $\delta = \alpha c$ and $w = k\log{(n/\epsilon)}$ where $\alpha$ is such that $k = \frac{1}{1-p_2(\alpha)}$. The condition $k \geq \frac{1}{1-p_2(2)}$ ensures that $\alpha \geq 2$ and thus $\delta \geq 2c$.
\begin{itemize}
\item  
Again, the choice of $k$ is such that $p_2(\frac{\delta}{c}) = 1 - \frac{1}{k}$. Repeating the analysis of [Theorem 4, 1] we get that the property {\bf P2} holds with probability at least $1-\epsilon$. 
\item Now $p_1(\delta)= p_1(\alpha c) = 1 - \frac{e^{-2c^2}}{8c^3\sqrt{2\pi}}  \geq 1 - e^{-c^2} $. Now if $c^2 \geq \log{(k/\epsilon)\log(n/\epsilon)}$ i.e.  then we have $p_1(\delta) \geq 1 - \epsilon/w$. Again, similar to [Theorem 4, 1], this implies that {\bf P1} holds with probability atleast $1-\epsilon$. Hence algorithm {\bf A} solves the $(1,c)$- Near Neighbor problem with an error probability of at most $\epsilon$ using a TCAM of width $k\log{n}$ when $c^2 \geq \log{\left(\frac{k}{\epsilon}\log{\frac{n}{\epsilon}}\right)}$.
\end{itemize}
\end{enumerate}
\end{proofof}

\subsection{The $c$-ANNS problem}\label{canns}

\fussy Consider a data set $S$ consisting of $n$ points and a query point ${\bf q}$. Let $r_0$ and $r_{max}$ denote the smallest and largest possible distances from ${\bf q}$ to its nearest neighbor in $S$ and let $m = \lceil 2\log{r_{max}/r_0} \rceil$. To solve the $c$-ANNS problem we use a simple (but weak\footnote{The weakness of this reduction is because of the possibility that $m$ might be large or unbounded. We remark that the approach in \ref{canns} cannot be trivially modified to use the ``adaptive'' reduction of $c$-ANNS to $O(\log{\frac{n}{c-1}})$ instances of $(l,c)$-NN problem proposed by Har-Peled \cite{HP01}}) reduction \cite{IM98,GIM99} from $c$-ANNS to $m$ instances of $(1,\sqrt{c})$-NN problem. Next, we describe the pre-processing step. Let the parameters $\delta,w$ be chosen as in the analysis of [Theorem 4,1] such that the error probability in solving a $(1,\sqrt{c})$-NN problem on $S$ is at most $\epsilon/m$.

\noindent{\bf For each} $i$ in $1 \ldots m$:
\begin{enumerate}
\item Let  $l_i = r_0c^{\frac{i-1}{2}}$, $u_i = r_0c^{\frac{i}{2}}$.
\item Scale down the coordinates of the data points by $l_i$ and find ternary hash representations of the data points using a $(1,\sqrt{c},p_1(\delta),p_2(\delta/\sqrt{c}))$-TLSH family. 
\item Store the hash representations in the TCAM of width $w$, in order of increasing $i$.
\end{enumerate}
The TCAM lookup of the hash representation of ${\bf q}$, i.e. $T({\bf q})$ (using the same hash functions) is output as the $c$-approximate nearest neighbor. Let $l^*$ denote the distance of ${\bf q}$ to its nearest neighbor in $S$, i.e. $l^* = \text{argmin}_{{\bf s} \in S}\norm{{\bf s-q}}$ and $i^*$ denote the first $i$ in $1 \ldots m$ for which $l_i \geq l^*$. Then the correct solution $(l_{i^*},u_{i^*})$-NN problem yields the $c$-approximate nearest 
neighbor of ${\bf q}$. This is because $l^* > l_{i^*-1}$ and the output is at a distance of at most $u_{i^*} = cl_{i^*-1}<cl^*$ from ${\bf q}$. The choice of parameters $\delta$ and $w$ is such that each $(l_i,u_i)$-NN problem is solved with an error probability of at most $\epsilon/m$. Hence the probability of making an error in solving any one of the m the $(l_i,u_i)$-NN problems is at most $\epsilon$. This approach can be generalized to using TCAMs with smaller widths but $O(m\log{(1/\epsilon)})$ lookups per query point in a manner similar to [Theorem 4.1,2]. As mentioned before, Tao \etal \cite{tao} describe a method to solve the $c$-ANNS problem without solving a sequence of near neigbor problem, using the computation of longest common prefixes of binary strings. It would be interesting to find out if their approach can be adapted for use with TCAMs in order to avoid solving a sequence of $(l_i,u_i)$-NN problems.

%If the $(l_{i^*},u_{i^*})$-NN problem is solved correctly, the data point output by the TCAM is a $c$-approximate nearest neighbor of query ${\bf q}$.
%Hence the above algorithm solves the $c$-ANNS problem with error probability at most $\epsilon$. Then the probability of making an error in the output is at most the probability of making an error while solving the $(l_{i^*},u_{i^*})$-NN problem, i.e. $\epsilon$. 

\section{Simulations and Experiments}\label{simu:expt}
In this section we explore the trade-off between the width of the TCAM and the performance of the algorithm {\bf A}. In particular we show via simulations that a TCAM of width $288$ bits solves the $(1,2)$-NN problem on practical and artificially generated (but illustrative) data sets consisting of $1$M points in $64$ dimensional Euclidean space. Finally, we also design an experiment with TCAMs inside an enterprise ethernet switch (Cisco Catalyst 4500) to show that TLSH can be used to configure a TCAM to perform 1.488 million queries per second per 1Gbps port.

\subsection{Simulations:}\label{simu}
\begin{figure*}\label{veri}
\centering
\subfigure[Fscore]{\includegraphics[scale=0.32]{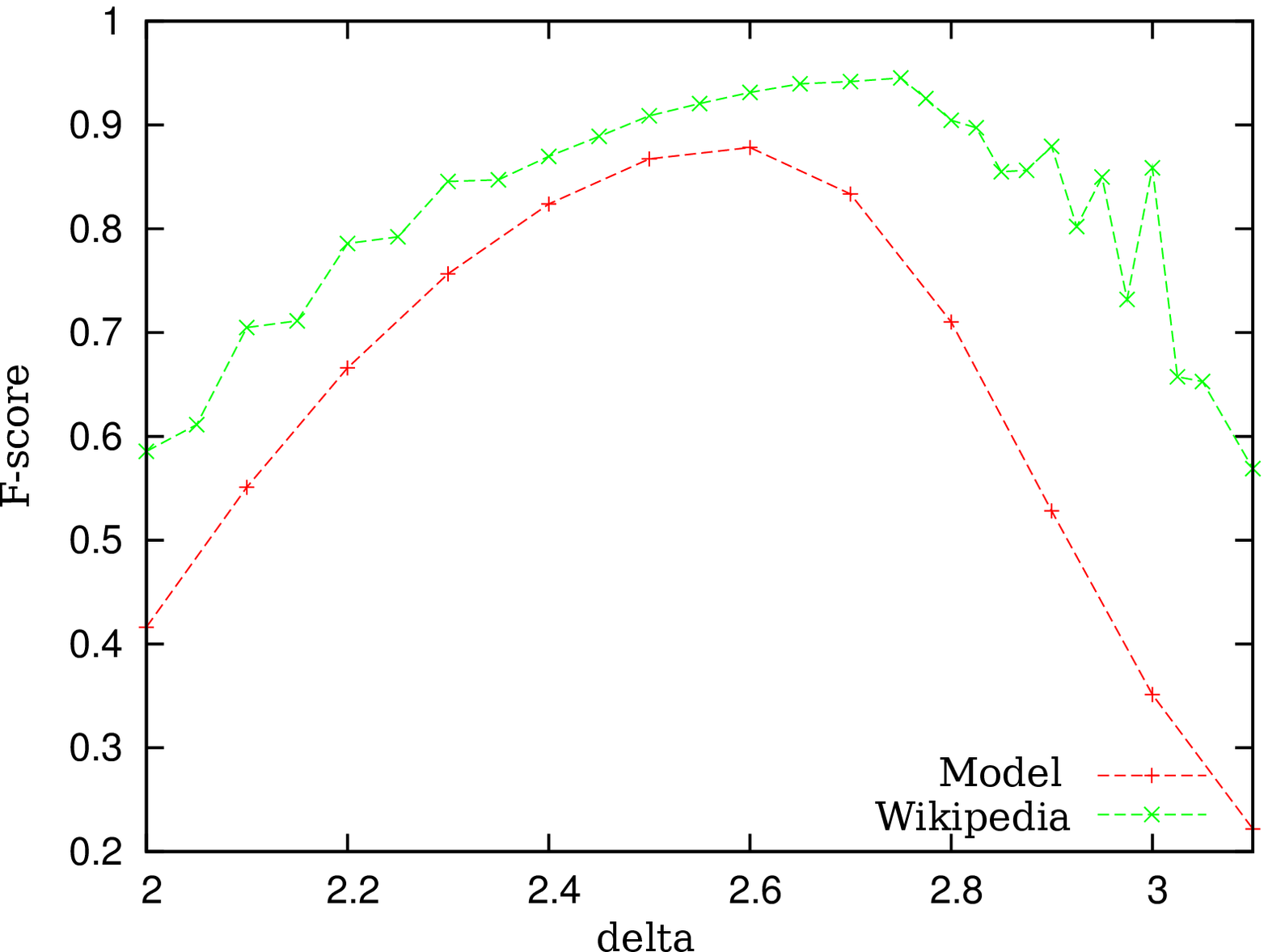}}
\subfigure[False positives per query]{\includegraphics[scale=0.32]{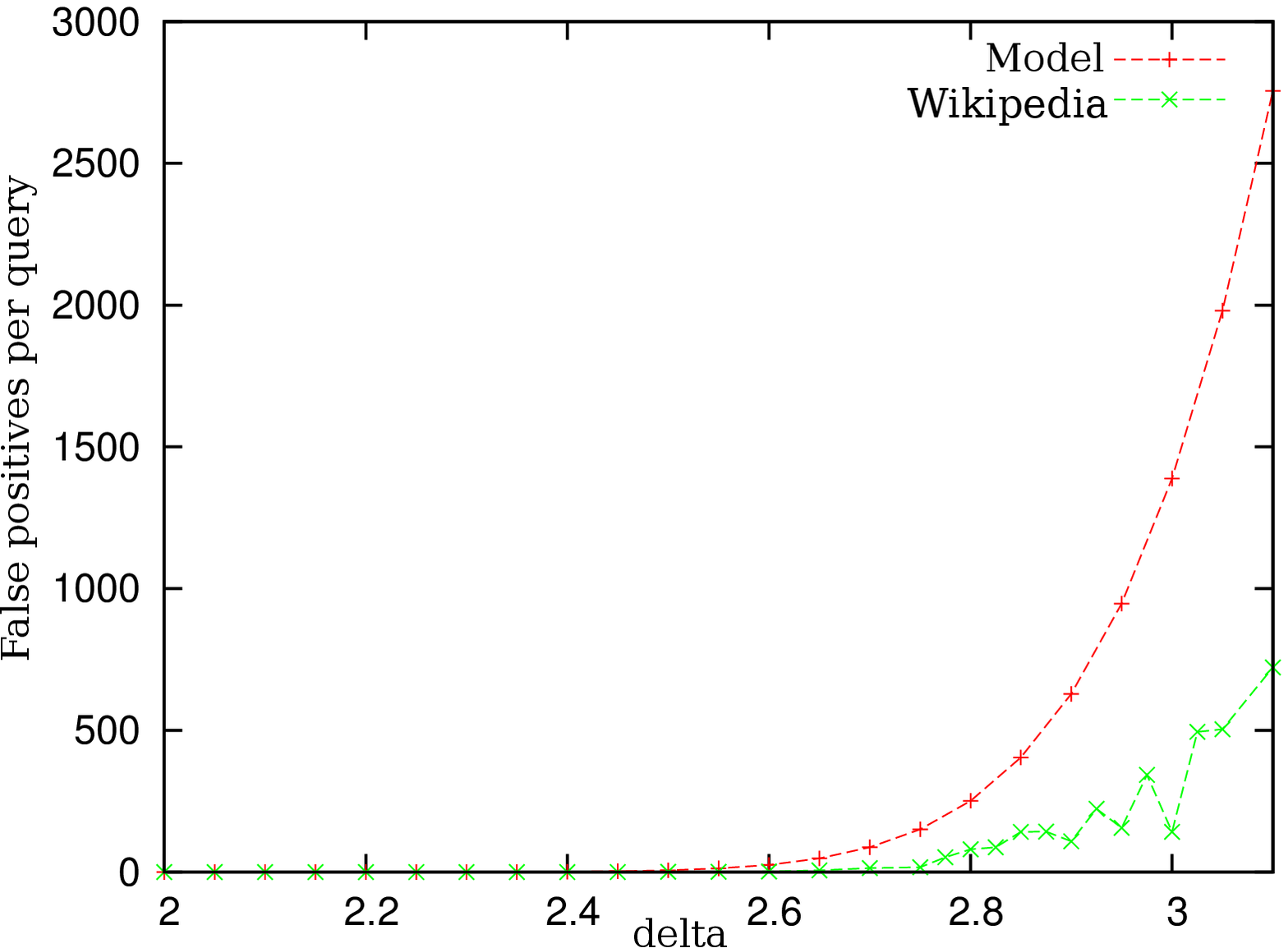}}
\subfigure[False negative rate]{\includegraphics[scale=0.32]{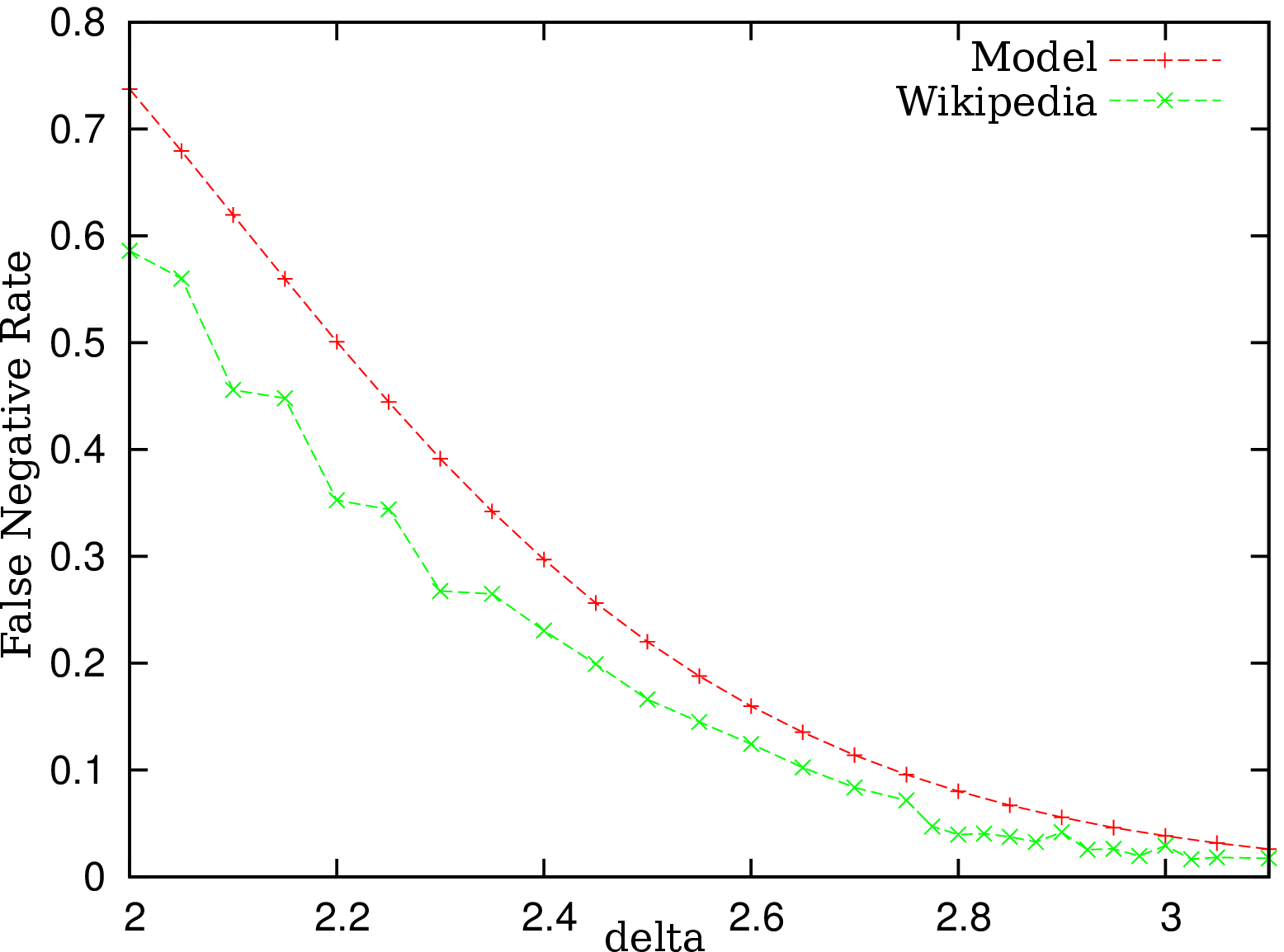}}
\caption{Variation of performance measures observed for the Wikipedia data set with $\delta$ and the corresponding model predictions as described in section \ref{model}, using a TCAM of width $w = 288$ bits. As we can see in this figure, increasing $\delta$ decreases the false negatives but increases the false positives. Thus there exists an optimal choice of $\delta$ which minimizes false positives when false negative rate is below some threshold and maximizes the F-score. This figure also shows the comparison between the Wikipedia data set and its corresponding model, using a TCAM of width $w = 288$ bits.}
\end{figure*}

We evaluate our algorithm on 3 specific data sets with query points generated artificially. Each data set contains a million points chosen from a $64$ dimensional Euclidean space ($n = 10^6$, $d = 64$). The corresponding query set contains $1$K points generated from a $64$ dimensional Euclidean space. We list the data sets we used ordered from the most "benign" to the "hardest" as follows.
\begin{description}
\item["Random" data:] 
\sloppy We chose data points generated uniformly at random from the $d$-dimensional cube $C_d = {\{-2/\sqrt{d},2/\sqrt{d}\}}^d$ for this data set. We chose half of the query points by selecting a random data point ${\bf s}$ (say) and choosing a point uniformly at random from the surface of a sphere of radius $l$ centered at ${\bf s}$. We generated remaining half of the query points uniformly at random from $C_d$. The size of the cube as well as the choice of the query points ensured that a significant fraction of the query point - data point pairs were separated by a distance either at most $l$ or at least $cl$. (Note that query point - data point pairs such that distance between them lies in between $l$ and $cl$ do not contribute to either false positives or false negatives and can safely be ignored. )

\item[Wikipedia data set:]
\fussy The second data set we used is the semantically annotated snapshot of the English Wikipedia (SW v.2) data set, obtained from Yahoo!. It contained a snapshot of the English Wikipedia (from 2005) processed with publicly available NLP tools. We computed the "simHash" signatures \cite{C02,MJS07}, and embedded the signatures in a Euclidean space\footnote{We used an appropriate scaling in order to ensure that a significant fraction of the query point - data point pairs are such that the distance between the query point and the data point was either at most $l$ or at least $cl$}. The query points were generated by randomly choosing $1$K data points of the data set and flipping a few (at most $3$)\footnote{The perturbation was chosen according to the experimental study of near duplicate detection in web documents\cite{MJS07}.} randomly chosen bits of their simHash signatures. 

\item["Threshold" data:] 
The third data set we used was artificially designed to maximize the number of false positives and false negatives. 
A single query point was generated uniformly at random from $C_d$. In order maximize the number of false negatives and the number of false positives seen, half of the data points were chosen to lie on the surface of a sphere $S_1(\bf{q})$ (say) of radius $l$ centered at $q$ and the remaining half are chosen to lie on the surface of a sphere $S_2(\bf{q})$ (say) of radius $cl$ centered at $q$ (The data points were on the "threshold" of being similar and dissimilar to $q$). This setup was repeated for each of the $1$K query points and the average values of the false negatives and false positives  observed are reported. 
\end{description}

Apart from presenting the number of false positives observed per query and the false negative rate (fraction of false negatives observed) as a measure of accuracy, we also report the F-score  or the $\text{F}_1$-measure \cite{RIJ79} of our algorithm which is just the harmonic mean of precision and recall. Precision is defined as the fraction of retrieved documents that are relevant. Recall is the fraction of relevant documents that are retrieved. Similar to precision and recall, the Fscore lies in the range $[0,1]$ and a intuitively a high value of F-score implies high values of precision and recall. 

\sloppy To explore the trade-off between accuracy and TCAM width, we choose the TCAM widths in the range $w$ = $32,64,96,128,144,160,192,224,256,288,320$ bits. (Note that commercially available TCAMs have $72$,$144$,$288$ bit configurations). As we are interested in an accurate algorithm, as a design choice we set the the tolerance of the false negative rate at $\epsilon_n = 5\%$ and minimize the number of false positives generated under this constraint. For each $w$, we choose $\delta$ for which the least number of false positives are observed while ensuring that the false negative rate is below $5\%$. For F-score, we chose the $\delta$ which maximizes the F-score using a binary search. We illustrate this procedure for a TCAM of width $w = 288$ bits as shown in Figure \ref{veri}. As expected, increasing $\delta$ decreases the false negative rate but increases the number of false positives and thus generates a bell shaped curve for the Fscore. The figure shows that there exists an optimal choice of $\delta$ which minimizes the false positives or maximizes the Fscore. We refer to this choice of $\delta$ as $\delta_{\text{opt}}$. 

 \subsection{Model}\label{model}
\fussy The process just described for arriving at the optimal choice of $\delta$ involves the use of the query points. Hence, the optimal value of $\delta$ can not be precomputed given just the database. However, it turns out that only an estimate of the distribution of query points gives a good approximation to choosing the optimal $\delta$. Let $n_1$ denote an estimate of the no. of data points which are "similar" to the query. Let $n_2$ denote an estimate of the number of data points "dissimilar" to the query. Consider a model containing a single query point $q$ with $n_1$ points on $S_1(q)$ and $n_2$ points on $S_2(q)$. Then for a TCAM of width $w$ using the expressions for $p_1(\delta)$ and $p_2(\delta/c)$ it is possible to theoretically calculate the expected values of the false negative rate, no. of false positives per query and the expected f-score for this model and use them as a predictions for choosing $\delta$. For each data set, we use the average number of similar and dissimilar points to a single query (by averaging over the $1$K queries) as $n_1$ and $n_2$ in the model. The observed values of the false negative rate, number of false positives per query, and the f-score as $\delta$ is varied were found to closely match those predicted by the model. For example, a comparison of these quantities observed for the Wikipedia data set as $\delta$ is varied with those predicted by a model for this data set is shown in figure \ref{veri}.  

\subsection{Results and discussion} 
\begin{figure*} \label{results_fscore}
\centering
\subfigure["Random" data set] {\includegraphics[scale=0.32]{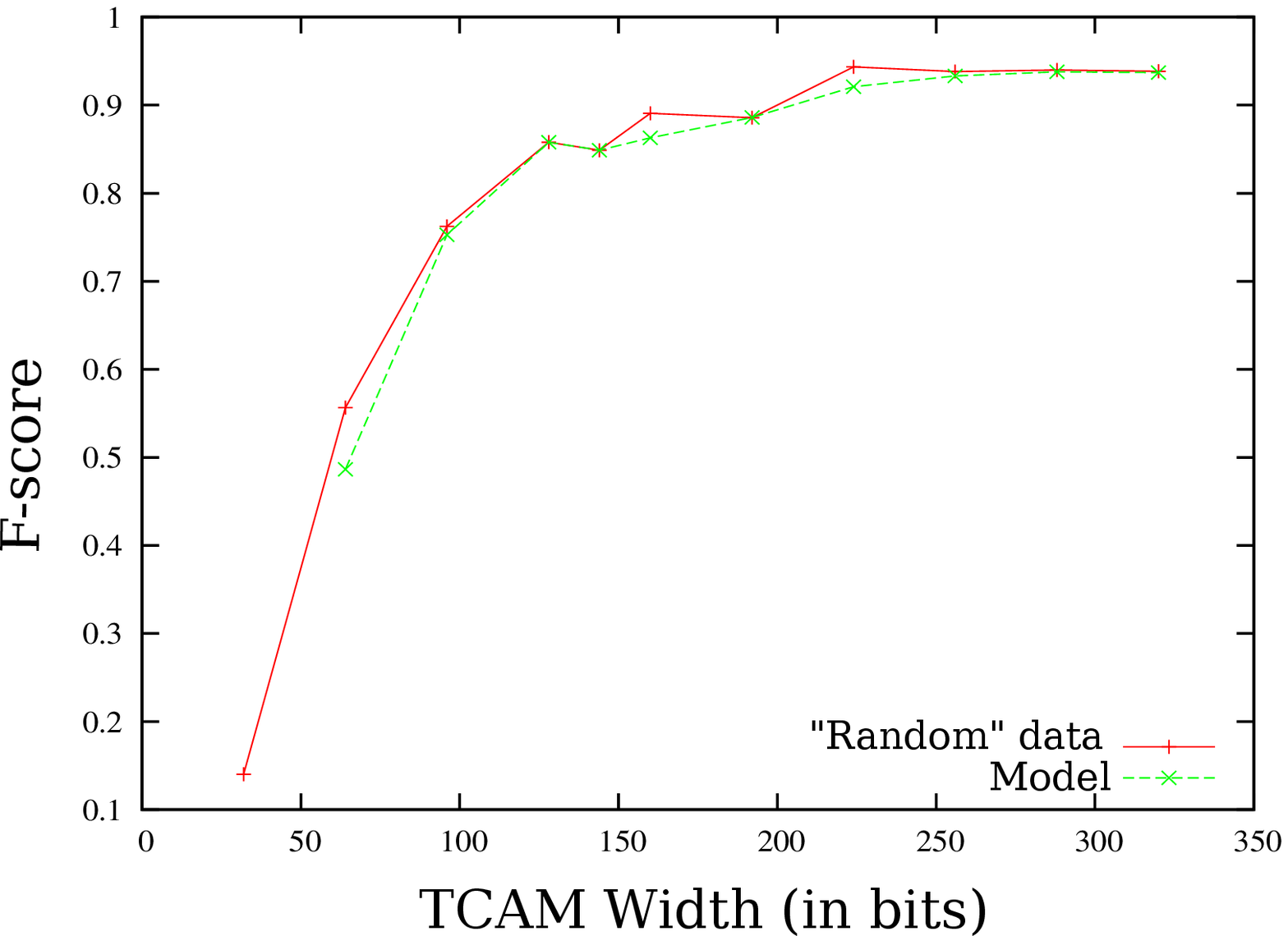}}
\subfigure[Wikipedia data set] {\includegraphics[scale=0.32]{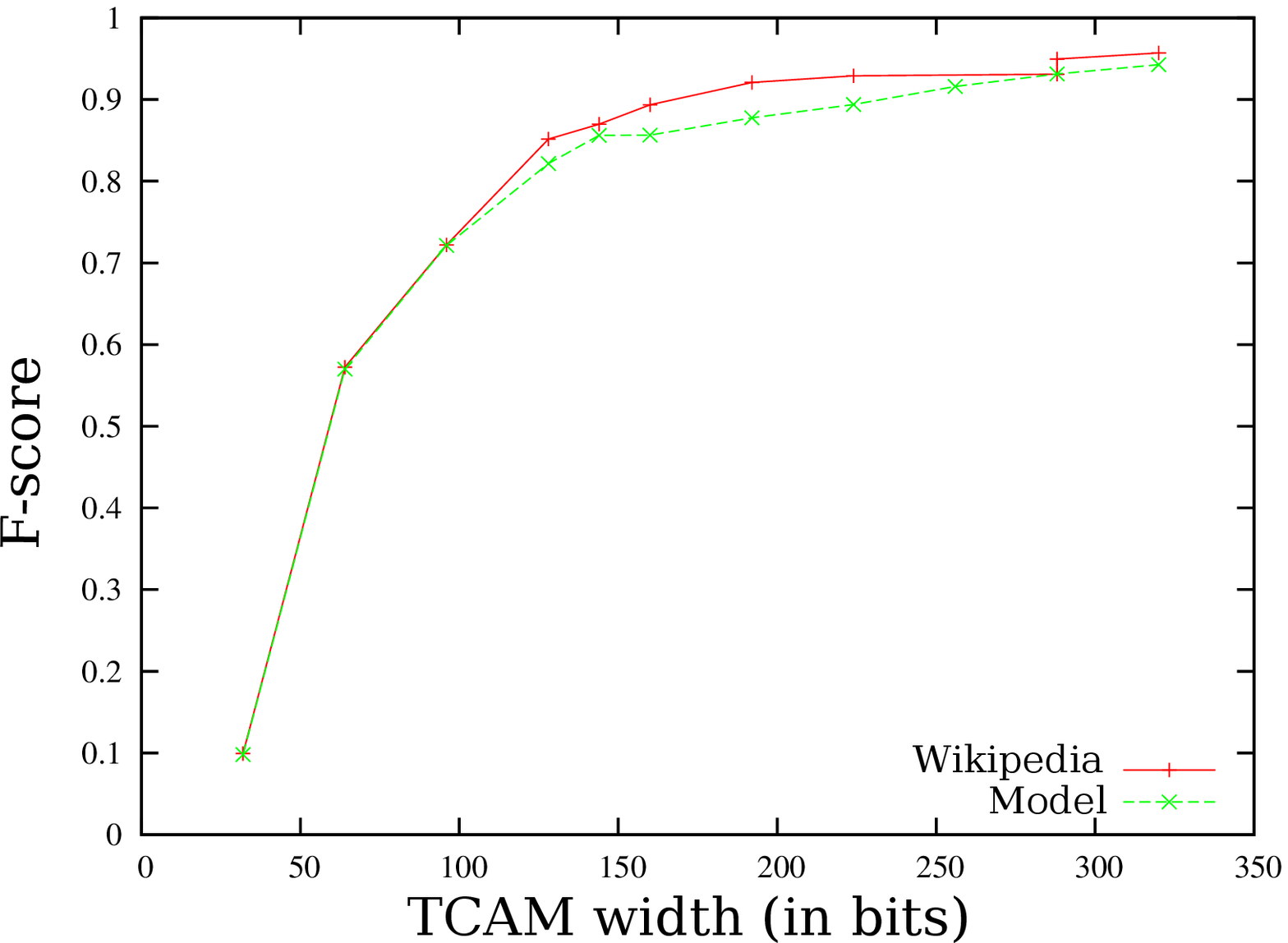}}
\subfigure["Threshold" data set] {\includegraphics[scale=0.32]{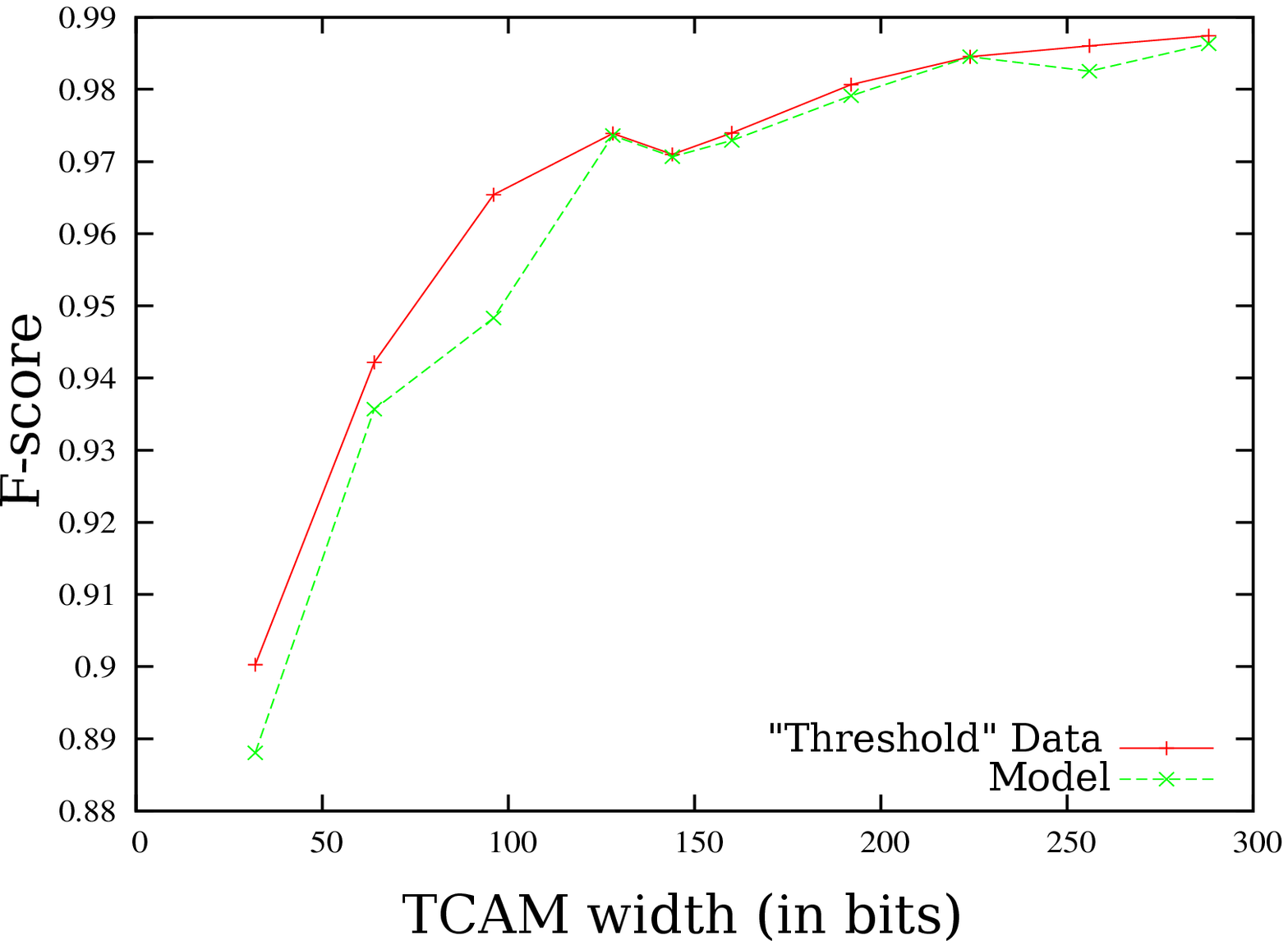}}
\caption{Variation of the F-score vs the width of the TCAM: As we can observe from this figure, F-score of our algorithm increases with the width of the TCAM used for different data sets. This figure also shows that on range of data sets, use of a TCAM of width $288$ bits results in a method with the F-score approximately $0.95$. Finally, this figure also shows there is only a slight loss in performance if $\delta$ is precomputed according to the model, as opposed to being chosen optimally.}
\end{figure*}

\begin{figure*} \label{results_fpfn}
\centering
\subfigure["Random" data set] {\includegraphics[scale=0.32]{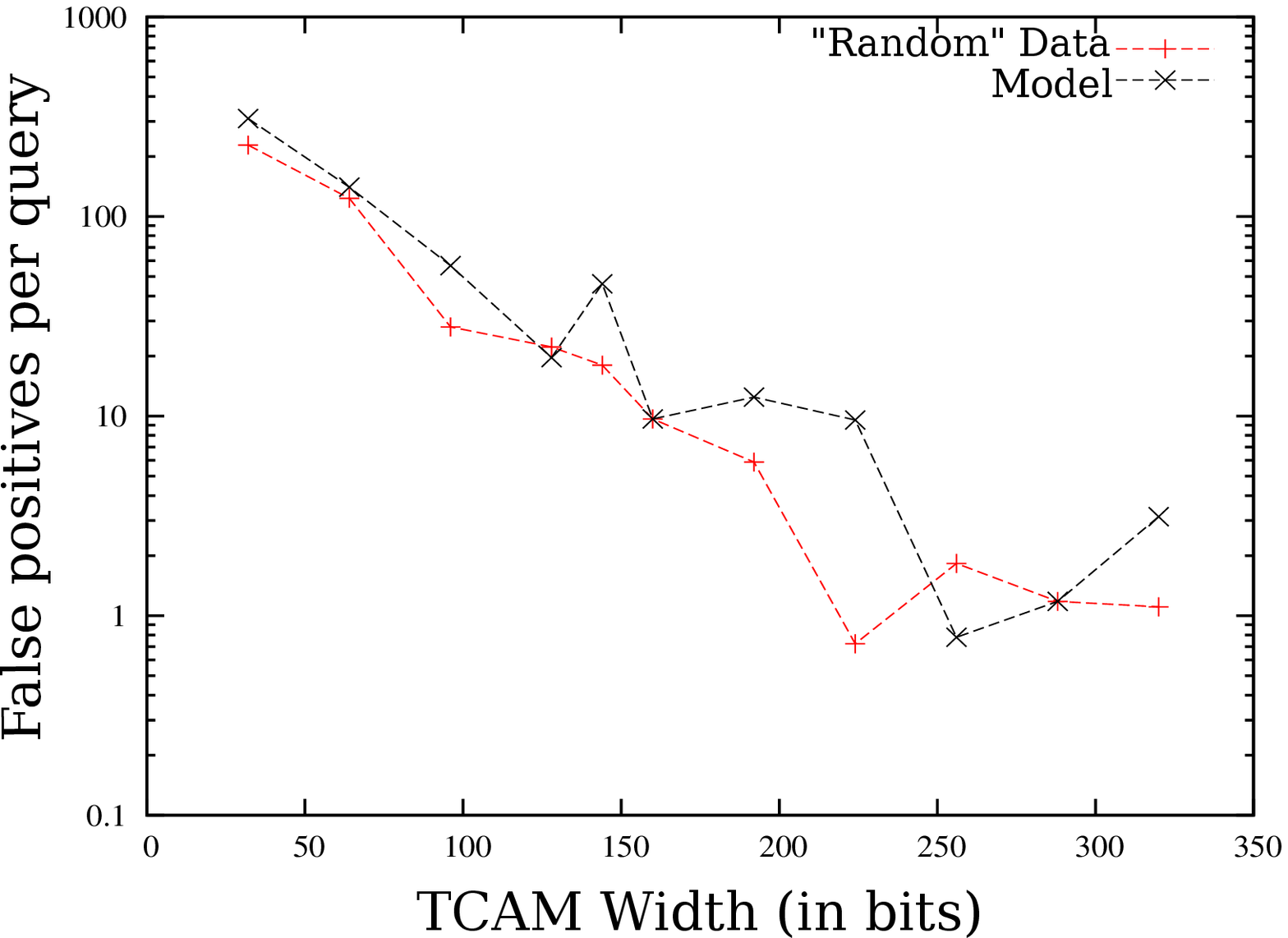}}
\subfigure[Wikipedia data set] {\includegraphics[scale=0.32]{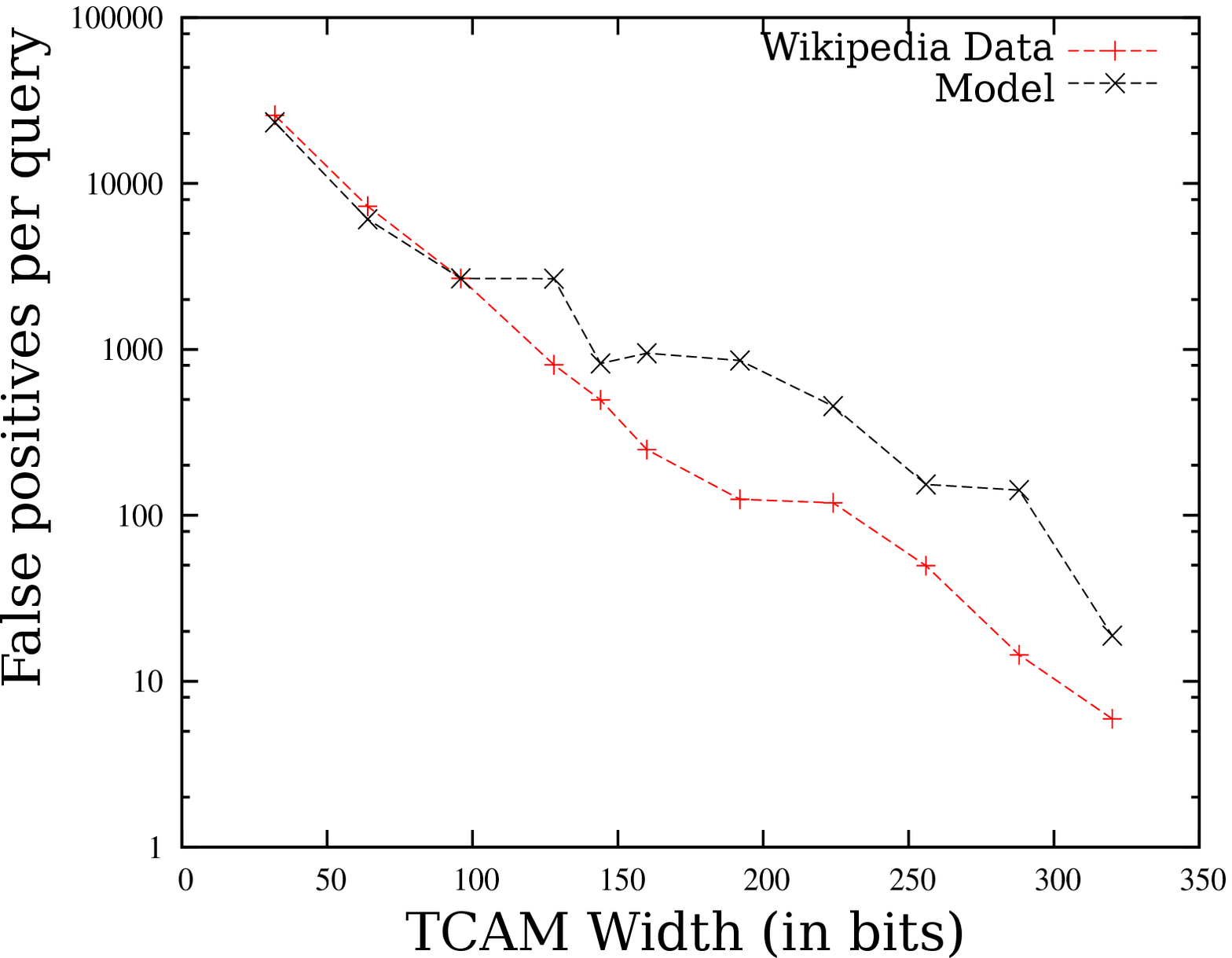}}
\subfigure["Threshold" data set] {\includegraphics[scale=0.32]{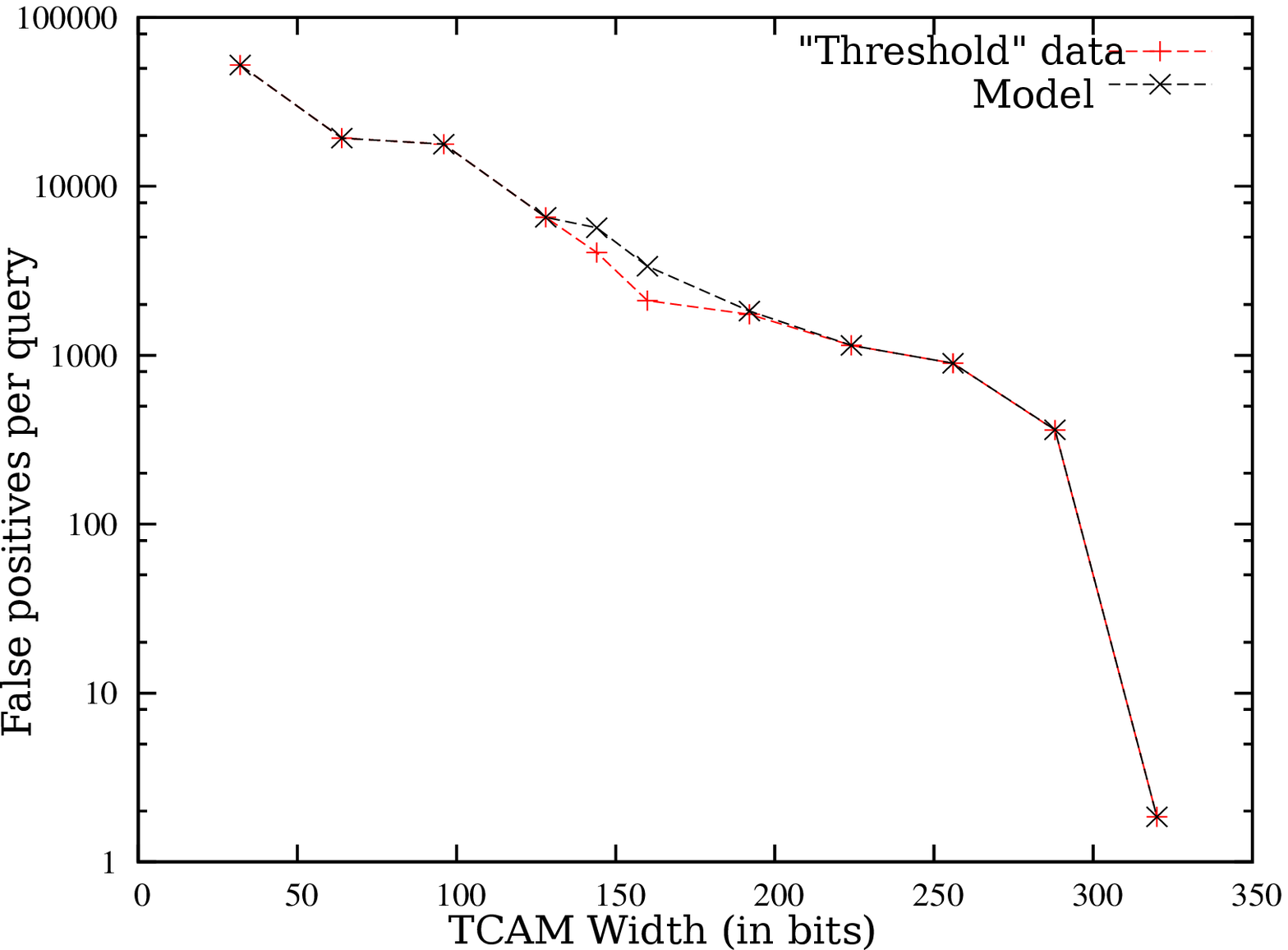}}
\caption{Number of false positives per query vs the width of the TCAM with false negative rate capped at $5\%$. This figure shows that the number of false positives per query drops rapidly as the width of the TCAM is increased with the false negative rate capped at $5\%$. This suggests that on practical data sets, use of a TCAM of width $288$ bits generates few false positives}
\end{figure*}

We observe that performance of our algorithm i.e. the F-score and the number of false positives generated, improves as the width (size) of the TCAM is increased as seen in figures \ref{results_fpfn} and \ref{results_fscore}. As seen in the figure, the improvements in the F-score follow the law of diminishing returns for increasing TCAM widths and a F-score better than 0.95 is obtained using a TCAM of width $288$ bits for all the data sets considered which intuitively indicates high values of precision and recall. Secondly, we note that while tolerating a false negative rate of 5\%, only $1$ false positive was observed per query for the "Random" data set. For the Wikipedia data set, the number of false positives observed per query was $14$ while for the threshold dataset $51$ false positives were observed, while the false negative rate was below the threshold of $\epsilon_n = 5\%$.  These simulation results suggest that a TCAM of width $288$ bits can be used to solve the $1,2$-NN problem on data sets consisting of a million points. 

We seek solutions in which false negative rate is at most 5\% and the number of false positives generated per query by the method is at most $10$. For the "Random" data set, the use of a $288$ bit TCAM actually satisfies these demands, while for the Wikipedia data set, the use of a $288$ bit TCAM comes very close to matching these requirements. Since $288$ bit wide TCAMs containing 0.5M entries are available in the market, our method represents a novel yet easy solution to the problem of similarity search in high dimensions. Even though a larger number of false positives are generated (51) by using a 288 bit wide TCAM on the "Threshold" data set, we note that this data set was artificially constructed to maximize the number of false positives and false negatives and we conjecture the property of all the similar points to a query being on the "threshold" of being similar and dissimilar points being on the "threshold" of being dissimilar is unlikely to be observed in practical data sets. We would also like to mention here that it is also possible to generate a worst case input distribution for the F-score which has just a single point similar to a given query point (on the sphere $S_1(q)$) and all the remaining data points are dissimilar to the query (on the sphere $S_2(q)$). Running the simulations on this data set we observed that the performance was not too worse than the results presented in this section, even though this property (of having a single similar data point to a query) is unlikely to be observed in real data sets. 

\subsection{Preliminary experimental validation}

In this section, we demonstrate that the simulations of TLSH are
realistic, and that the TLSH algorithm can be made to work with existing TCAM
based products at very high speeds. For this, we need to choose an appropriate
platform. Although it is possible to use a standalone TCAM platform,
managing the TCAM in software is non trivial. For a preliminary
validation, we leverage a Cisco Catalyst 4500 (Cat4K) series enterprise
switch \cite{cisco4500} which uses TCAMs for a variety of purposes
including implementing access control lists (ACLs). In one second, it
can support up to a billion TCAM lookups and switch 250 million
packets.

Our simple observation is as follows. For validating a $64$-bit TCAM
lookup, we map it to an IP address lookup in a $64$-bit IPv4
access control list. For example, a $64$-bit lookup key could be
represented as a $32$-bit IPv4 source and a $32$-bit IPv4 destination
address. This query is embedded within an IPv4 source and destination
address fields of an IP packet and injected into the Cisco switch.
Access control lists involve TCAM lookups. The TCAM database is
similarly represented as entries of an ACL with permit action for
matches, i.e.  if the TCAM matches a given query, the action
would be to permit the IP packet and if there is no match, the action
would be to drop the packet. Thus, all egress packets represent
queries that had a TCAM hit as shown in figure \ref{ciscofigure}.

We use a high speed commercial traffic generator (from IXIA). Though the
Cat4K switch can support up to 384 1Gb/s ports, we use two 1Gb/s
ports for this experiment, and connect these to two ports of IXIA,
which are programmable and can inject traffic with specified IP
addresses. We pass packets from one port and detect egress packets on
the other via the switch. A switch learns the source and the
destination for the given hardware MAC addresses of a packet (that we
set manually) and switches these packets in hardware.  We inspect the
egress packets' IP addresses to determine which queries hit the TCAM.
To ensure the speed, we send IP packets (representing queries) at wire
speed (i.e. ~1.5 million packets per 1Gb/s port).

We validated several randomly generated data sets, for $32$ and $64$
bit TLSH lookups. For each data set, we randomly generate negative,
positive and false positive queries and the inspect the egress
packets' IP addresses. We observe that for every positive or false
positive query (according to TLSH), we do indeed have an egress packet
with the corresponding IP address. For every negative query, we never
detect the corresponding IP packet at egress. We believe that this
simple experimental setup is novel as it allows us to rapidly
demonstrate the performance argument without the overheads of managing
TCAMs!

\begin{figure*}\label{ciscofigure}
\centering
\includegraphics[scale=0.8]{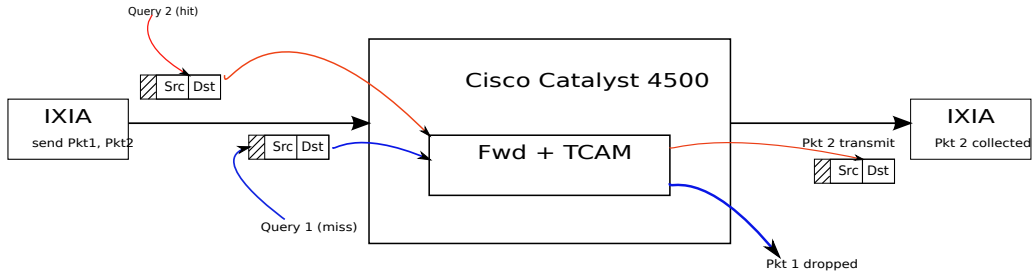}
\caption{Block diagram of the experimental setup. Queries are inserted in the ipv4 header and pumped into the switch using a IXIA traffic generator. ACLs are programmed using the TLSH algorithm and are applied to ingress packets. For a ACL match we forward the packet and drop otherwise. Egress packets are collected at another IXIA port and they correspond to the matched queries}
\end{figure*}

\section{Related work}\label{relwork}
Early methods to solve similarity search problems in high dimensions used the space partitioning approach in order to solve the exact nearest neighbor problem by reducing the candidate set of data points for a given query, using branch and bound techniques. They includes the famous k-d tree approach \cite{Bentley:kdtree}, cover trees \cite{Kakade:covertrees}, navigating nets \cite{Krauthgamer:lee:navigating:nets}. However an experimental study \cite{expt:space:partitioning} has showed that approaches based on space partitioning scale poorly with the number of dimensions $d$ and in fact when $d > 10$, they performed worse than a brute force linear scan for some specific data sets (curse of dimensionality).
 
%In order to overcome the curse of dimensionality, Indyk and Motwani \cite{IM 98} proposed the concept of locality sensitive hashing (LSH) which solved teh $c$-approximate nearest neighbor problem with space requirement and query time polynomial in the size of the database and the number of dimensions. 
%One of the first algorithms proposed to solve the $(l,c)$-Near Neighbor problem was proposed by Kushilevitz \etal \cite{KOR98}. It had a query time of poly$(d,\log{n},{(c-1)}^{-1})$ and space requirement of $n^{O(1/{(c-1)}^2)}$. It was based on the technique of dimensionality reduction proposed by Johnson and Lindenstrauss \cite{JL84}. Further work \cite{CR04,AC06} improved the query time to $(d+\log{n}+{(c-1)}^{-1})^{O(1)}$ using the same space requirement $n^{O(1/{(c-1)}^2)}$. Andoni \etal \cite{AIP06} recently proved a lower bound of $n^{\Omega({1/(c-1)}^2)}$  on the space required by any data structure which solves the $(1,c)$-Near Neighbor problem  with a constant number of probes. Hence the aforementioned algorithms are space-optimal. For values of $c$ close to $1$, this leads to a large space requirement which seems to be impractical for applications.  
 
Locality sensitive hash (LSH) family was proposed by Indyk and Motwani \cite{IM98} to solve the $c$-ANN problem with space requirement and query time polynomial in the size of the database and the number of dimensions. Given parameters $l$, $u$, $p_l$ and $p_u$, a $(l,u,p_l,p_u)$-LSH family of hash functions has the following property: The probability that two points separated by a distance atmost $l$ are hashed to the same value is at least $p_l$ and probability that two points separated by a distance at least $u$ are hashed to the same value is at most $p_u$. Gionis \etal \cite{GIM99} showed a framework based on a $(l,u,p_l,p_u)$-LSH family (where $u = cl$), to solve the $(l,c)$-Near Neighbor problem in time $O(dn^\rho\log{n})$ using space $O(dn + n^{1+\rho}\log{n})$ where $\rho = \log{p_l}/\log{p_u}$. Their algorithm used a LSH family with $\rho = 1/c$. For the case of Euclidean space, the exponent $1/c$ was improved to $\beta/c$ for some fixed constant $\beta < 1$ by Datar \etal \cite{DIIM04}. A near linear storage space solution was proposed by Panigrahy \cite{P06} which has space requirement of $\tilde{O}(n)$ and but a larger query time $\tilde{O}(n^{2.09/c})$ using entropy based techniques along with using the LSH family. Building on this work, Lv \etal \cite{Charikar:multiprobe} suggested the use of multi-probe LSH methods to reduce the number of hash tables required for solving the $c$-approximate nearest neighbor problem \cite{Charikar:multiprobe}. Andoni and Indyk \cite{AI06} further improved the value of $\rho$ (for Euclidean space) to $1/c^2+o(1)$. This value of $\rho$ is near-optimal since it matches the lower bound for LSH proved by Motwani \etal \cite{MNP06}. 

For $c \approx 1$, the near quadratic space requirement of the {\emph optimal} LSH could be a hindrance in solving large problems like image similarity with millions of images in the data set\cite{WTF08:NIPS}. In fact recent studies have shown that machine learning techniques like restricted Boltzmann machines and boosting, out perform LSH when the number of bits available is small and fixed \cite{SH09,WTF08:CVPR}. Also the query time of $O(dn^{1/c^2})$ makes the application of LSH for proximity based methods like clustering and classification difficult in a streaming environment. Hence, in this paper, we consider the use hardware primitives like TCAMs in order to formulate fast, space efficient and accurate methods to solve similarity search problems. 

While TCAMs have been used previously in order to obtain efficient solutions to the problem of finding frequent elements in data streams\cite{divi}, we are not aware of any other work which uses TCAMs for solving similarity search and nearest neighbor problems.  
 
In parallel, there has been significant progress in proving lower bounds for the approximate nearest neighbor problem using the cell probe model \cite{C02,AIP06,PTW08,Yao:cellprobe}. In particular Panigrahy, \etal \cite{PTW08} show that a data structure which solves the $c$-ANNS problem using $t$ probes must use space $n^{1+\Omega(1/(c^2t))}$. This implies that any data structure that uses $\tilde{O}(n)$ space with poly-logarithmic word size, and with constant probability, gives a constant approximation to nearest neighbor problem must be probed $\Omega(\log{n}/\log{\log{n}})$ times. We note that the use of hardware primitives like TCAMs which implement highly parallel operations (not conforming to the cell probe model of computation) enables us to circumvent these lower bounds.
\vfill\eject

\section{Conclusion}\label{conclusion}
In this paper we have proposed a new method to solve the approximate nearest neighbor problem which yields an exponential improvement over existing methods. This improvement is brought about by using a hashing scheme which does not conform to lower bounds for standard binary hashing schemes. This hashing scheme (TLSH) is supported by a TCAM. In fact using a TCAM of width poly-logarithmic in the size of the database, the approximate nearest neighbor problem can be solved in a single TCAM lookup. Using simulations we have shown that off the shelf TCAMs with width $288$ bits can be used to solve similarity search problems on various databases containing a million points in $64$ dimensional Euclidean space. We also design an experiment to demonstrate that even existing TCAMs within enterprise ethernet switches can perform 1.5M ANN queries per 1Gbps port. Thus, we believe that TCAM based similarity search might open new vistas in ultra high speed data mining and learning applications.

\section*{Acknowledgements}
We thank Cisco Systems for experimental resources, Yahoo! for providing the English Wikipedia data set, Piotr Indyk for pointing out the reduction from $(l,c)$-NN problem under $l_{\infty}$ norm to the partial match problem, Srinivasan Venkatachary and Sudipto Guha for helpful discussions. We also thank anonymous reviewers for helpful comments and pointing us to \cite{tao}.
\bibliographystyle{abbrv}
\bibliography{references} 
\end{document}